\global\long\def\congest{\mathrm{con}}%
\newcommand{\load}{\text{load}}
\newcommand{\sep}{\operatorname{sep}}
\newcommand{\spa}{\operatorname{spars}}
\newcommand{\congslack}{\Theta_{\ref{thm: linked ED}}}
\newcommand{\conConst}{\Theta_{\ref{thm:flow character}}}
\author{
\begin{tabular}[t]{c@{\extracolsep{2.5em}}cc} 
        Bernhard Haeupler  &  D Ellis Hershkowitz & Zihan Tan \\
        \small ETH Z\"urich \&  & \small  ETH Z\"urich & \small DIMACS, \\
        \small  Carnegie Mellon University &  & \small Rutgers University \\
\end{tabular}
}
\newtheorem{theorem}{Theorem}
\newtheorem{claim}{Claim}
\newtheorem{lemma}{Lemma}
\newtheorem{definition}{Definition}
\newcounter{note}[section]
\newcommand{\set}[1]{\left\{ #1 \right\}}
\title{Parallel Greedy Spanners}
\date{}
\begin{document}

\maketitle




\begin{abstract}

A $t$-spanner of a graph is a subgraph that $t$-approximates pairwise distances. The greedy algorithm is one of the simplest and most well-studied algorithms for constructing a sparse spanner: it computes a $t$-spanner with $n^{1+O(1/t)}$ edges by repeatedly choosing any edge which does not close a cycle of chosen edges with $t+1$ or fewer edges.

We demonstrate that the greedy algorithm computes a $t$-spanner with $t^3\cdot \log^3 n \cdot n^{1 + O(1/t)}$ edges \emph{even when a matching of such edges are added in parallel}. In particular, it suffices to repeatedly add any matching where each individual edge does not close a cycle with $t +1$ or fewer edges but where adding the entire matching might. Our analysis makes use of and illustrates the power of new advances in length-constrained expander decompositions.
\end{abstract}
\thispagestyle{empty}

\newpage
\setcounter{page}{1}

\newcommand{\pg}{\textnormal{\textsf{pg}}\xspace}
\newcommand{\dist}{d}
\newcommand{\vol}{\mathsf{vol}}

\section{Introduction}
A spanner of a graph is a sparse subgraph that approximately preserves pairwise distances. 

\begin{definition}[$t$-Spanner]
Given graph $G = (V, E)$ and $t \geq 1$, a $t$-spanner is a subgraph $H = (V, E')$ of $G$ such that for every $u,v \in V$ we have
\begin{align*}
    d_H(u,v) \leq t \cdot d_G(u,v).
\end{align*}
\end{definition}
\noindent (See \Cref{sec:notation} for standard notation and definitions). Since their formalization by \cite{peleg1989graph}, spanners have become indispensable in graph sparsfication and graph algorithms. For example, they have found applications in distributed broadcast \cite{awerbuch1990cost, awerbuch1992efficient}, network synchronization \cite{awerbuch1985communication,peleg1987optimal,awerbuch1990cost,awerbuch1992efficient,peleg2000distributed}, overlay, sensor and wireless networks \cite{braynard2002opus,vogel2003priority,kostic2002latency, von2004gathering,ben2004splast,shpungin2010near}, VLSI circuit design \cite{cong1991performance,cong1992provably,salman2001approximating}, routing \cite{wu2002light,thorup2001compact}, distance oracles \cite{thorup2005approximate, peleg2000proximity, roditty2005deterministic} and approximate shortest paths \cite{cohen1998fast,roditty2004dynamic,elkin2005computing,elkin2004efficient,feigenbaum2005graph}. Generally speaking, these works make use of sparse spanners; namely, spanners with a small number of edges.

One of the simplest and most well-studied ways of computing a sparse $t$-spanner $H$ is the greedy algorithm. The greedy algorithm is based on what we will call \emph{unspanned edges}.
\begin{definition}[Unspanned Edge]
    Given graph $G = (V,E)$ and subgraph $H\subseteq G$, say that edge $\{u,v\} \in E$ is $t$-unspanned with respect to $H \subseteq G$ if $d_H(u,v) > t$.
\end{definition}

\noindent In other words, $\{u,v\}$ is $t$-unspanned with respect to $H$ if it does not close a $(t+1)$-cycle in $H$. To compute a $t$-spanner $H$, the greedy algorithm simply repeatedly adds $t$-unspanned edges to $H$ until none exist; see \Cref{alg:seqGreedy}.

\begin{algorithm}[H]
    \caption{Sequential Greedy Algorithm for $t$-Spanners}
    \label{alg:seqGreedy}
    \begin{algorithmic}[0] 
            \State \textbf{Input:} Graph $G = (V,E)$ and $t \geq 1$
            \State \textbf{Output:} A $t$-spanner $H$ of $G$
            \State Initialize $H \gets \emptyset$
            \While{$\exists e \in E$ that is $t$-unspanned with respect to $H$}:
                \State $H \gets H \cup \{e\}$
            \EndWhile
            \State \Return $H$
    \end{algorithmic}
\end{algorithm}

\noindent It is easy to see that the resulting $H$ is indeed a $t$-spanner of $G$ since, by construction, every edge in $G$ has a length at most $t$ path in $H$ between its endpoints. Additionally, a classic analysis shows the result is sparse, containing at most most $n^{1 + O(1/t)}$ edges \cite{althofer1993sparse, ahmed2020graph}: by construction the graph output has girth at least $t$ (i.e.\ contains no cycles of length $t$ or less) and classic so-called Moore bounds state that a graph with girth at least $t$ contains at most $n^{1 + O(1/t)}$ edges. Assuming the ``Erdős Girth Conjecture'' \cite{erdos1965some}, this sparsity is asymptotically-optimal. Furthermore, the greedy spanner is existentially-optimal in that its output is as sparse as possible for any given graph family \cite{filtser2016greedy}.

However, one downside of the greedy algorithm is its apparent sequentialness: whether an edge is $t$-unspanned with respect to $H$ is dependent on what edges have previously been added to $H$. Furthermore, the sparsity analysis of the output spanner is also quite delicate in that it relies on the output spanner having girth at most $t$ and has little to say if the girth is much smaller than $t$.

In summary, the (sequential) greedy algorithm is one of the most well-studied algorithms for constructing a sparse $t$-spanner but seems inherently sequential and has a delicate sparsity analysis.

\subsection{Our Results}
In this work, we show that the classic sequential greedy algorithm, in fact, admits significant opportunities for parallelism while (approximately) retaining its sparsity guarantees. In particular, we show that even when many unspanned edges are added at once the resulting graph is still sparse. To formalize this, we generalize the above notion of an unspanned edge to a set of edges as follows.
\begin{definition}[Unspanned Edge Set]
    Given graph $G = (V,E)$ and subgraph $H\subseteq G$, edges $\hat{E} \subseteq E$ are \emph{$t$-unspanned} if each $e \in \hat{E}$ is $t$-unspanned with respect to $H$.
\end{definition}
\noindent Then, the parallel greedy algorithm computes a $t$-spanner $H$ by simply repeatedly adding matchings of $t$-unspanned edges to $H$; see \Cref{alg:parGreedy} and \Cref{fig:pg}.

\begin{algorithm}[H]
    \caption{Parallel Greedy Algorithm for $t$-Spanners}
    \label{alg:parGreedy}
    \begin{algorithmic}[0] 
            \State \textbf{Input:} Graph $G = (V,E)$ and $t \geq 1$
            \State \textbf{Output:} A $t$-spanner $H$ of $G$
            \State Initialize $H \gets \emptyset$
            \While{$\exists e \in E$ that is $t$-unspanned}:
                \State Let $M \subseteq E$ be any matching in $G$ that is $t$-unspanned with respect to $H$ 
                \State $H \gets H \cup M$
            \EndWhile
            \State \Return $H$
    \end{algorithmic}
\end{algorithm}

\begin{figure}[h]
    \centering
    \begin{subfigure}[b]{0.24\textwidth}
        \centering
        \includegraphics[width=\textwidth,trim=0mm 0mm 260mm 0mm, clip]{./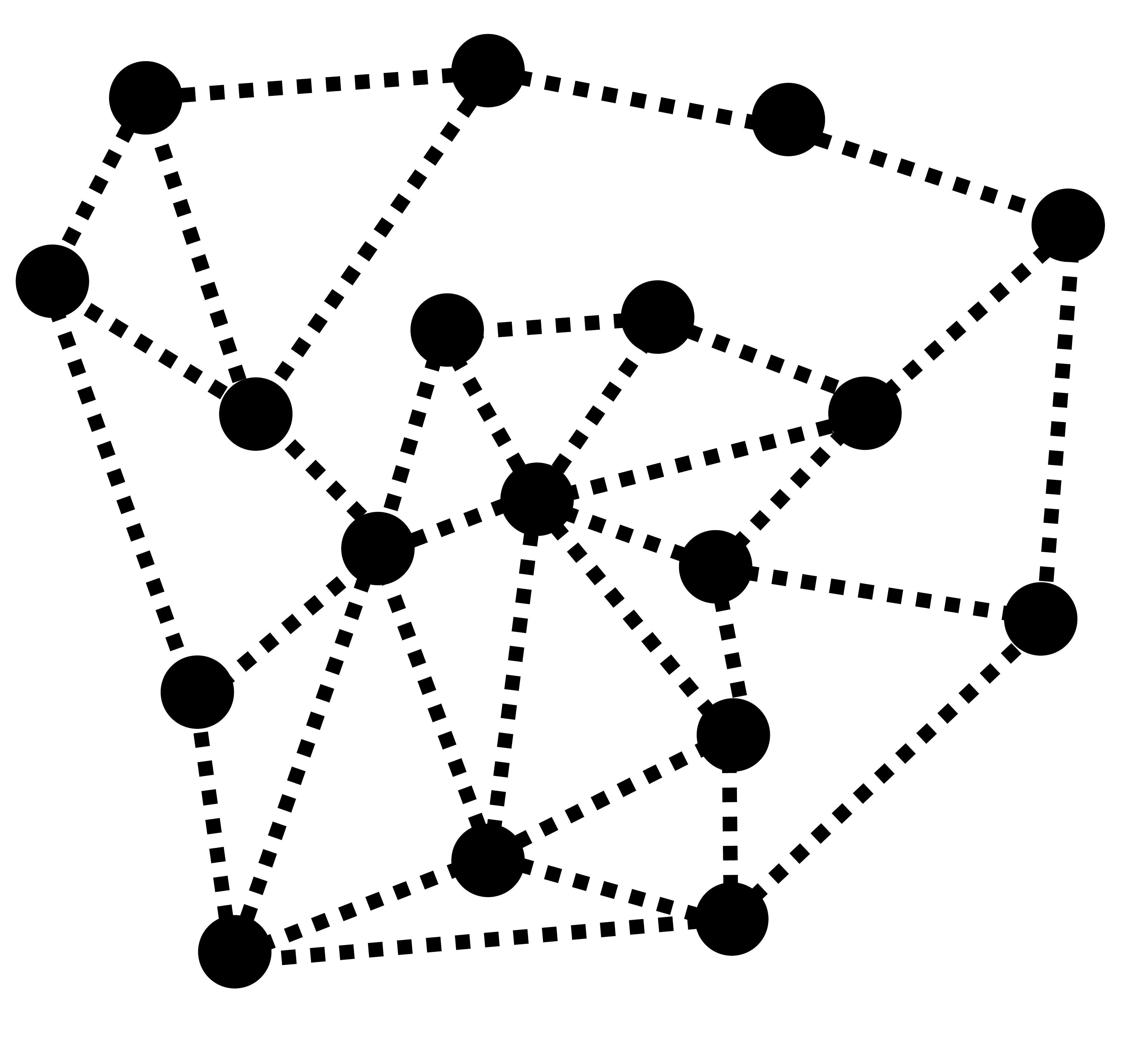}
        \caption{Input $G$.}\label{sfig:pg1}
    \end{subfigure}    \hfill
    \begin{subfigure}[b]{0.24\textwidth}
        \centering
        \includegraphics[width=\textwidth,trim=0mm 0mm 260mm 0mm, clip]{./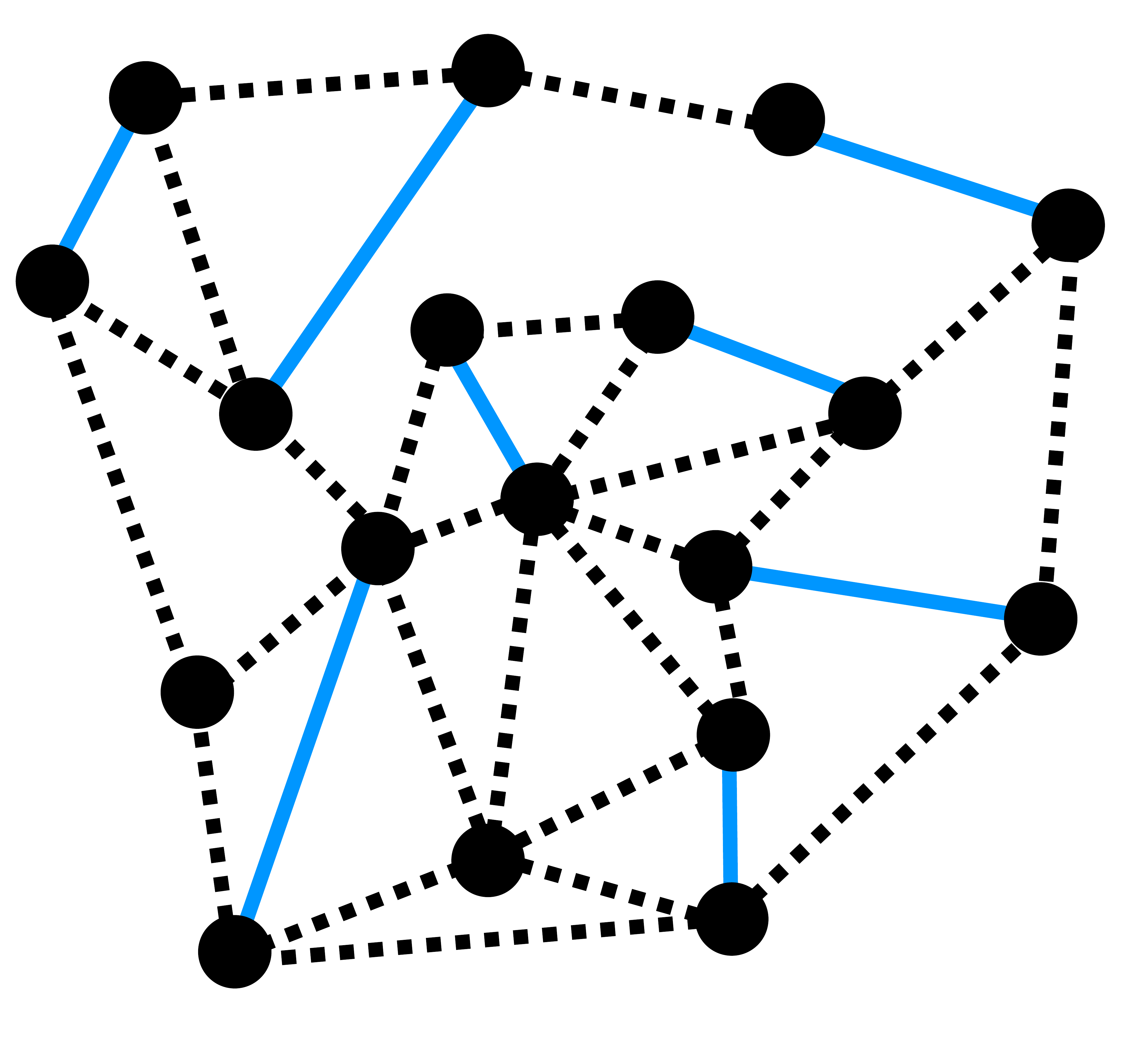}
        \caption{First matching.}\label{sfig:pg2}
    \end{subfigure}    \hfill
    \begin{subfigure}[b]{0.24\textwidth}
        \centering
        \includegraphics[width=\textwidth,trim=0mm 0mm 260mm 0mm, clip]{./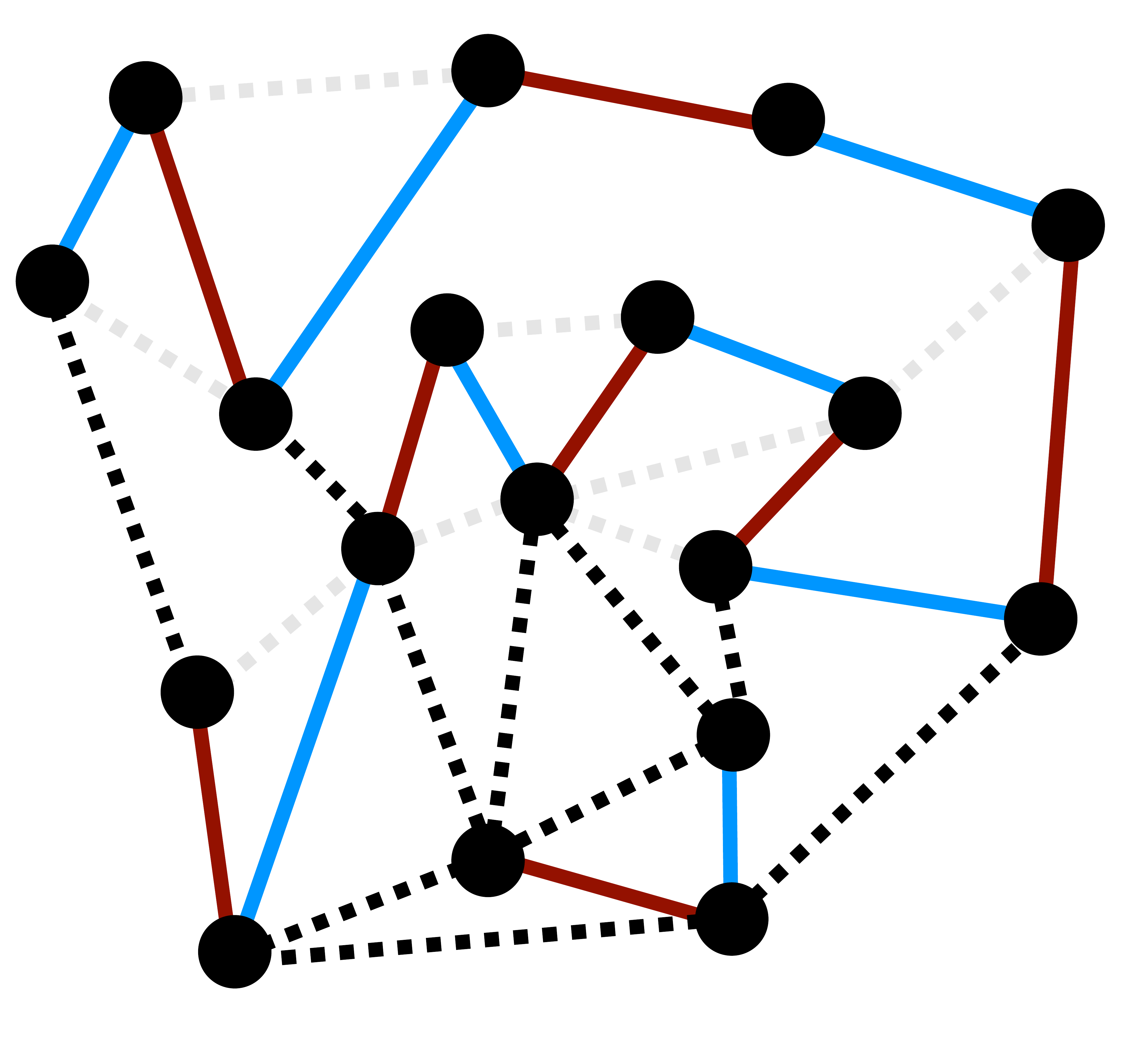}
        \caption{Second matching.}\label{sfig:pg3}
    \end{subfigure} \hfill
    \begin{subfigure}[b]{0.24\textwidth}
        \centering
        \includegraphics[width=\textwidth,trim=0mm 0mm 260mm 0mm, clip]{./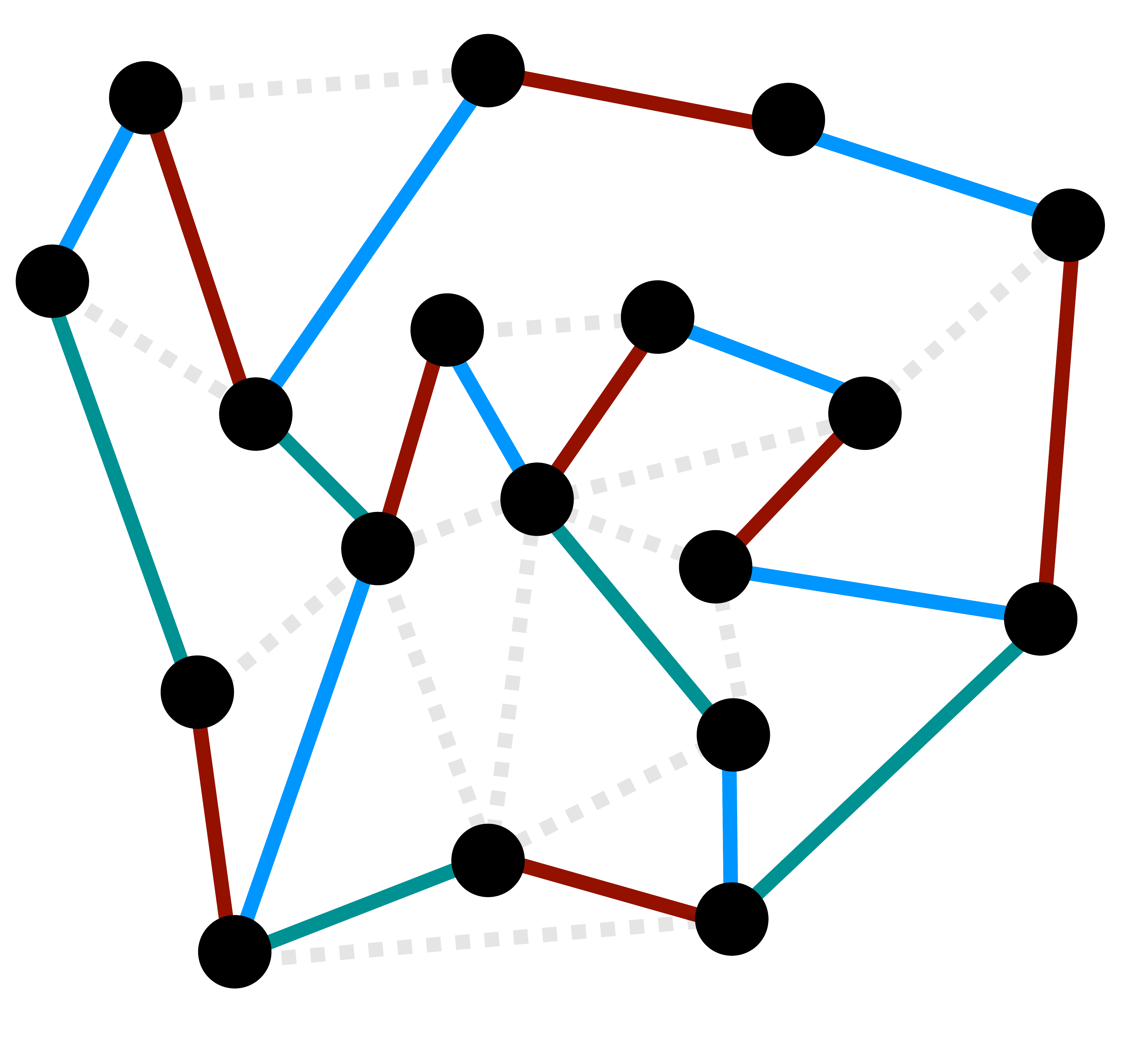}
        \caption{Third matching.}\label{sfig:pg4}
    \end{subfigure}
    \caption{Parallel greedy to construct a $3$-spanner. Edges not in a matching dashed and transparent if $3$-spanned with respect to $H$. Each matching colored and solid.} \label{fig:pg}
\end{figure}

Our main result is a proof of the sparsity of the output of the parallel greedy algorithm.
\begin{restatable}{theorem}{mainThm}\label{thm:main}
Parallel greedy (\Cref{alg:parGreedy}) outputs a $t$-spanner with $t^3\cdot \log^3 n\cdot n^{1 + O(1/t)}$ edges.
\end{restatable}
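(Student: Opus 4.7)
My plan is to recover a Moore-bound-style analysis for parallel greedy's output by exploiting length-constrained expander decompositions (LCED), compensating for the weakened ``girth'' guarantee that $H$ actually satisfies. The first step is to formalize this weakened guarantee as a \emph{matching-witness lemma}: any cycle $C$ of length at most $t$ in $H$ must contain two distinct edges belonging to the same matching $M_i$. To see this, let $e=\{u,v\}\in M_i$ be the latest-round edge on $C$; if every other edge of $C$ lay in $H_{i-1}$, the path $C\setminus\{e\}$ would witness $d_{H_{i-1}}(u,v)\le t-1$, contradicting that $e$ was $t$-unspanned when added. Because $M_i$ is a matching, the two same-round edges on $C$ are vertex-disjoint, so in particular non-adjacent along $C$.

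Next I would apply a length-constrained expander decomposition to $H$ with hop parameter $h=\Theta(t)$ and conductance $\phi$ chosen so that the total ``boundary'' edges produced by the decomposition account for at most a $1/\poly(t,\log n)$ fraction of $|E(H)|$; that portion is absorbed into the final bound. Inside each resulting cluster, the LCED provides routing guarantees---coming from theorems the paper references as the linked expander decomposition, HCED routing, and flow characterization---that allow one to route many demand pairs through the cluster along length-at-most-$h$ paths with small congestion. This expander-routing property is the workhorse used to bound each cluster's edge count.

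The per-cluster bound is then proved by contradiction: if a cluster had more than $n^{1+O(1/t)}\cdot\poly(t,\log n)$ edges, the LCED routing property would produce abundantly many cycles of length at most $t$ inside the cluster. The matching-witness lemma would force each such short cycle to pin down two same-round edges, and a pigeonhole over rounds would yield a matching $M_i$ inside the cluster whose edges are interconnected by short paths in $H_{i-1}$---directly contradicting the $t$-unspannedness of the edges of $M_i$. Summing over clusters and adding the boundary contribution then gives the claimed $t^3\log^3 n\cdot n^{1+O(1/t)}$ total.

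The \emph{main obstacle} is precisely this per-cluster pigeonhole step: the LCED provides expansion in the \emph{graph}, while the parallel greedy property is about \emph{rounds}, so one must extract a round-consistent witness path from the many expander-routed short cycles. Converting ``expansion produces many short hops'' into ``some matching $M_i$ has many edges whose endpoints are connected by length-$\le t$ paths using only edges from $M_1\cup\cdots\cup M_{i-1}$'' is where the delicate combinatorial work sits. Tracking the LCED slacks (hop stretch $O(t)$ and congestion $O(\log n)$) through this counting argument, together with the $\log n$-factor incurred to make the boundary edges negligible, compounds into exactly the $t^3\log^3 n$ overhead over the classical Moore bound $n^{1+O(1/t)}$.
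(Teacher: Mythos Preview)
Your proposal has the right machinery---length-constrained expander decompositions and their routing guarantees---but the step you yourself flag as the main obstacle is genuinely a gap, and the paper's proof circumvents it with a much simpler idea you are missing. Rather than pigeonholing over all rounds to find a ``round-consistent'' short cycle, the paper first passes (via Nash--Williams) from high arboricity to a subgraph $H$ of high minimum degree, then uses the linked LCED to extract a further non-empty subgraph $H'$ in which \emph{every} matching can be $\delta'$-routed with dilation $t$ and congestion $\delta'/2$. Since any subgraph of a $t$-$\pg$ graph is again $t$-$\pg$, one takes a $t$-$\pg$ sequence $(E'_1,\ldots,E'_k)$ for $H'$ and looks only at the \emph{last} matching $M=E'_k$. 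Routing $M$ in $H'$ sends $|M|\cdot\delta'$ total flow, but the edges of $M$ together carry at most $|M|\cdot\delta'/2$; hence some flow path of length at most $t$ avoids $M$ entirely and therefore lies in $H'\setminus M\subseteq H'_{k-1}$, directly contradicting $t$-unspannedness of the corresponding edge of $M$.

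The crucial point is that by working with the last matching of the subgraph's own $t$-$\pg$ sequence, the ``round-consistency'' problem disappears: every edge not in $M$ is automatically earlier, so one avoids any pigeonhole over rounds. Your matching-witness lemma is correct but is a detour; the argument needs only a single path avoiding one specific matching, and the congestion bound on the matching's own edges is exactly what produces it. Your cluster-by-cluster accounting is also unnecessary: the arboricity $\to$ min-degree reduction already localizes to a dense subgraph, and the LCED is applied there to obtain a single routable piece $H'$ rather than a global decomposition whose boundary one must control.
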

\noindent Even stronger, we show that parallel greedy's output has arboricity at most $t^3\cdot \log^3 n\cdot n^{O(1/t)}$. For constant $t$, this recovers the usual $n^{1 + O(1/t)}$ bound. While we state our results for unweighted graphs, the above can be made to work for edge-weighted graphs by a standard bucketing trick of weights at a cost of $O(\log n)$ in the sparsity.

\paragraph*{Analysis Overview.} The above result seems somewhat surprising in light of the usual analysis of the (sequential) greedy algorithm. As mentioned above, the output of the (sequential) greedy algorithm has low sparsity on account of its large girth. However, a cycle on $4$ nodes demonstrates that the parallel greedy algorithm can output a spanner with girth as small as $4$, regardless of the value of $t$---see \Cref{fig:girth4}. As such, an entirely different approach to analyzing the sparsity of the output spanner is required.

\begin{figure}[h]
    \centering
    \begin{subfigure}[b]{0.2\textwidth}
        \centering
        \includegraphics[width=\textwidth,trim=0mm 0mm 315mm 0mm, clip]{./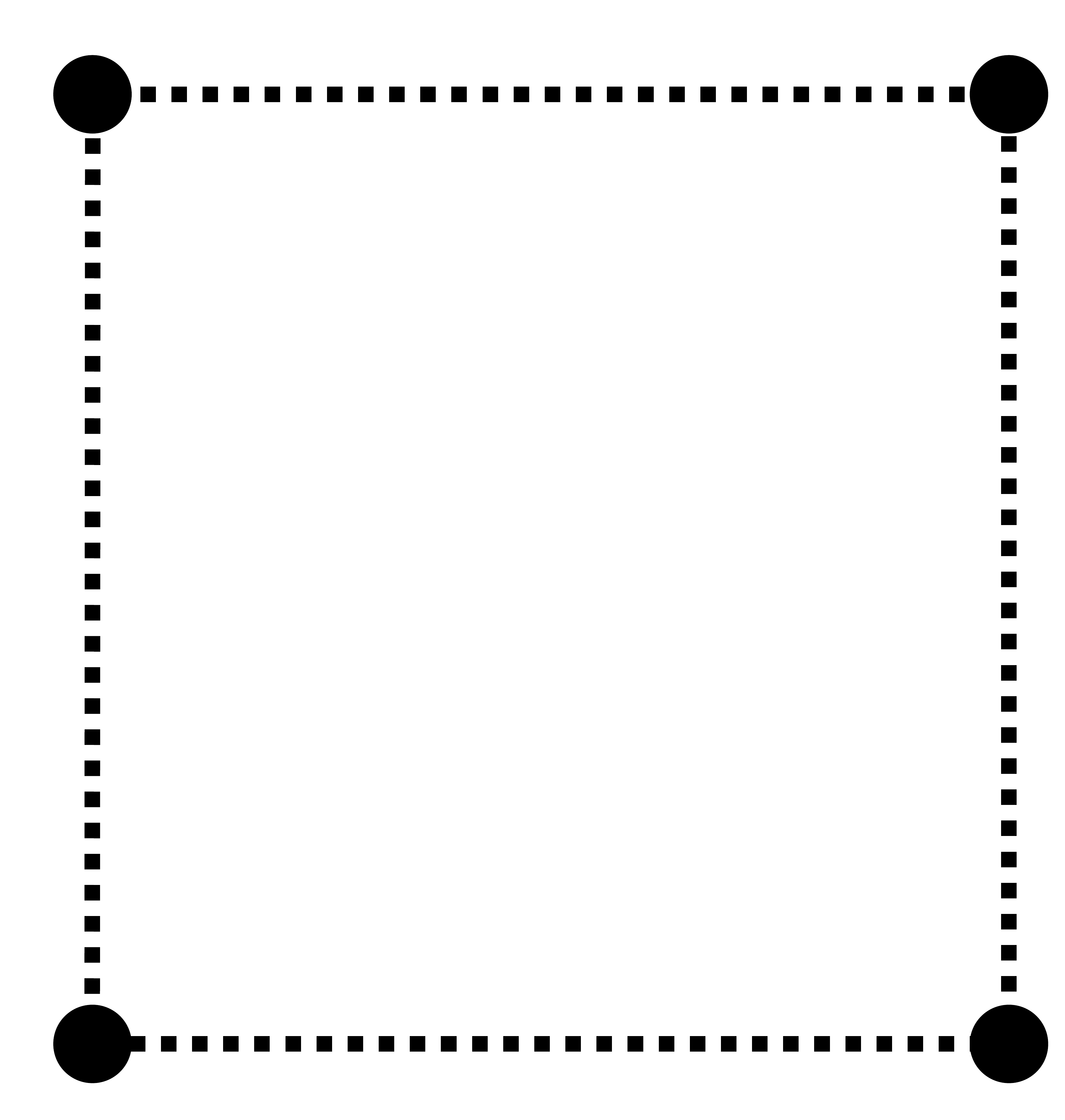}
        \caption{Input $G$.}\label{sfig:girthFour1}
    \end{subfigure}    \hfill
    \begin{subfigure}[b]{0.2\textwidth}
        \centering
        \includegraphics[width=\textwidth,trim=0mm 0mm 315mm 0mm, clip]{./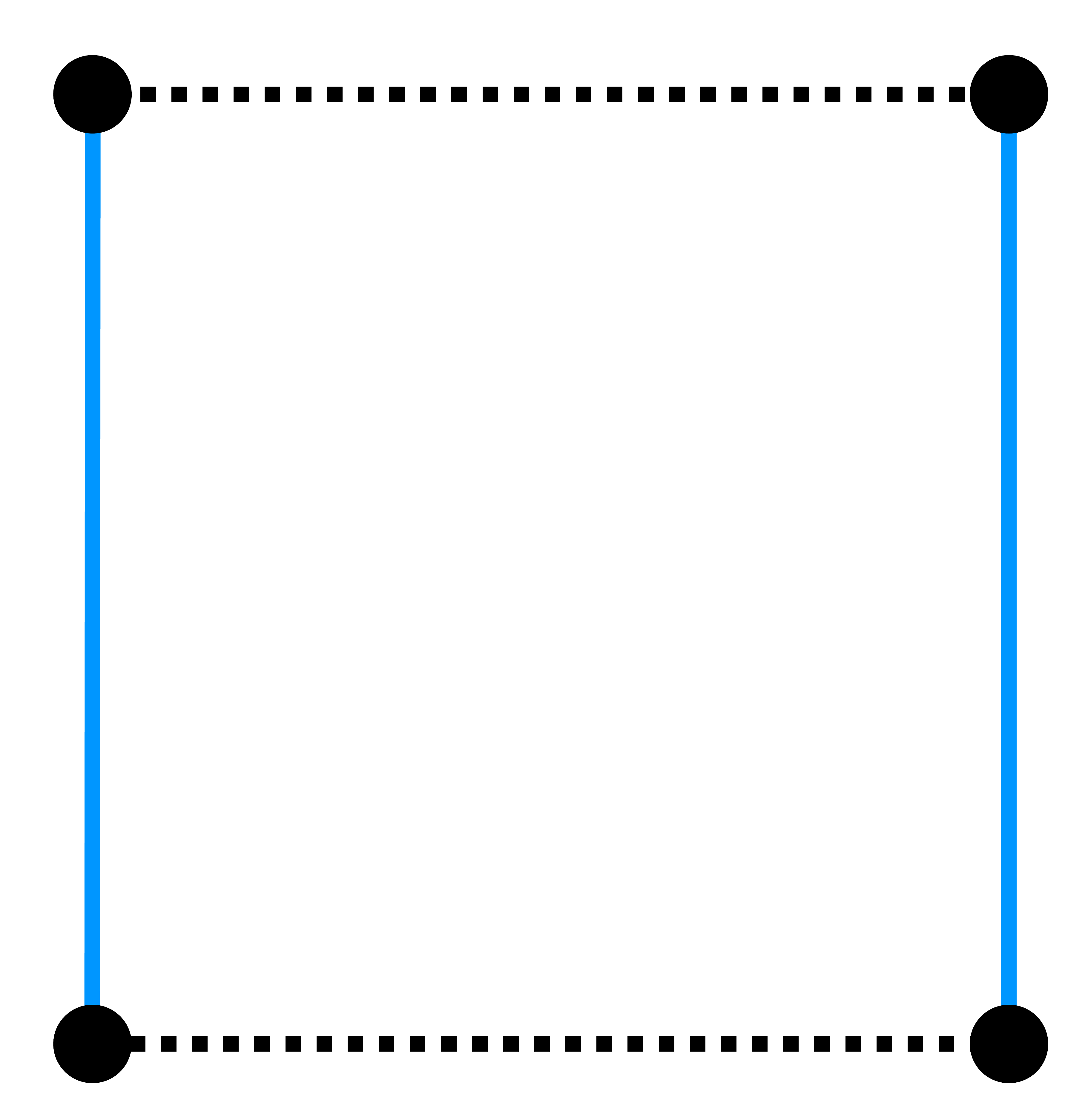}
        \caption{First matching.}\label{sfig:girthFour2}
    \end{subfigure}    \hfill
    \begin{subfigure}[b]{0.2\textwidth}
        \centering
        \includegraphics[width=\textwidth,trim=0mm 0mm 315mm 0mm, clip]{./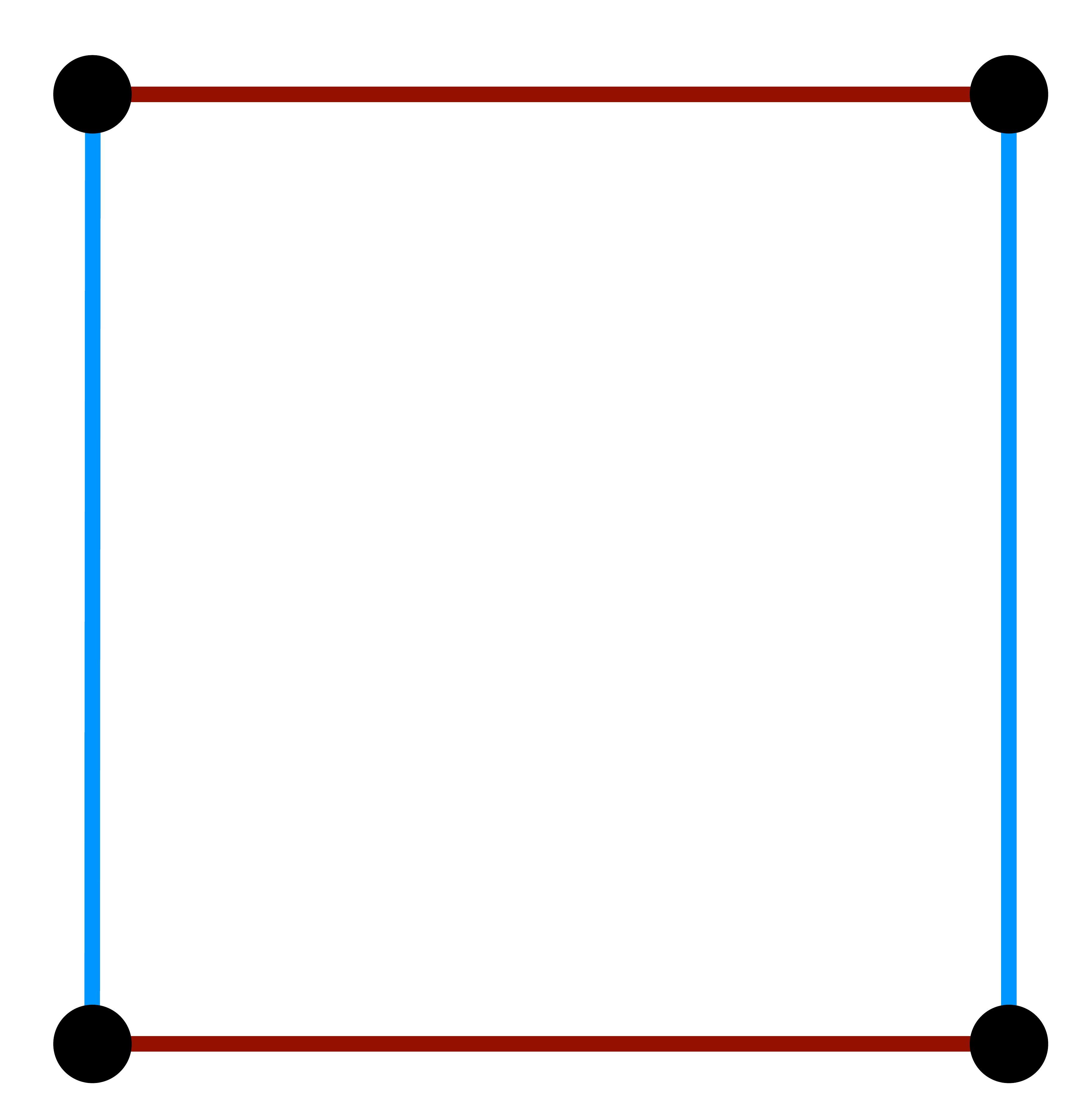}
        \caption{Second matching.}\label{sfig:girthFour3}
    \end{subfigure}\caption{A cycle on $4$ vertices shows that parallel greedy can return subgraphs with girth $4$ for any $t$. \ref{sfig:girthFour1} gives the input graph with $t$-unspanned edges dashed in black. \ref{sfig:girthFour2} and \ref{sfig:girthFour3} give the first and second matchings added by parallel greedy in blue and red respectively.} \label{fig:girth4}
\end{figure}


 Instead of a girth-based argument, we make use of an analysis based on length-constrained expander decompositions, as recently introduced by \cite{haeupler2022hop}. Roughly our analysis is as follows. A length-constrained expander decomposition allows us to assume that (up to the deletion of a small number of edges), given a matching in a graph with large minimum degree, one can find many $t$-length paths between the matching endpoints such that no edge is used by too many paths. On the other hand, a graph with large arboricity has a subgraph with large minimum degree and so if the output of our algorithm has large arboricity then we can find said non-overlapping paths for the last matching chosen by our algorithm. Since these paths do not overlap too much, one of them must not use any edges in the matching itself, contradicting the $t$-unspannedness of the edges we choose. Thus, our analysis provides an alternative approach to analyzing the sparsity of spanners output by greedy algorithms that does not rely on delicate girth-based arguments.

\paragraph*{Applications and Additional Related Work.} While we make use of the above analysis to bound the sparsity of the parallel greedy algorithm, we believe that graphs produced by the parallel greedy algorithm---which we call $t$-$\pg$ graphs---are a fundamental extremal object which will find applications beyond this work. Indeed, while our approach makes use of (the existence of) length-constrained expander decompositions, it is also conversely useful for (algorithms for computing) length-constrained expander decompositions. Specifically, another work \cite{haeuplerBetterExpanders} makes use of the sparsity of $t$-$\pg$ graphs to argue that the union of a sequence of sparse (length-constrained) cuts is itself a sparse (length-constrained) cut. This structural fact, in turn, was used by \cite{haeuplerBetterExpanders} to give improved algorithms for computing length-constrained expander decompositions. 

Likewise, several works \cite{bodwin2018optimal,bukh2017bound,fernandez2020graph,bodwin2022partially,bodwin2022bridge} prove lower bounds by considering graphs that are produced by adding batches of edges (not necessarily a matching) where no edge can complete a short cycle. For instance, this approach has been used to prove lower bounds on fault tolerant spanners \cite{bodwin2018optimal} and the communication complexity of computing spanners \cite{fernandez2020graph}.

In summary, we show that the classic greedy algorithm admits significant opportunities for parallelism while retaining its sparsity guarantees. In the process of showing this, we introduce a new robust way of analyzing the density of a particular graph class which we expect to find applications beyond spanners. The remainder of this work is dedicated to showing \Cref{thm:main}.

\section{Notation and Conventions}\label{sec:notation}
We review the notation and conventions we make use of throughout this work.

\paragraph*{Graphs.}
Let $G=(V,E)$ be a graph and edge-length function $\l_G$.  If unspecified then $\l_G$ assigns every edge value $1$. We let $d_G$ denote the shortest path metric according to $\l_G$ in $G$. We let $n := |V|$ and $m:= |E|$ be the number of vertices and edges. The \emph{girth} of a graph is the minimum number of edges in a cycle. A \emph{matching} $M \subseteq E$ is a collection of pairwise disjoint edges. Given edges $C$ we let $G - C := (V, E \setminus C)$ and $G + C := (V, E\cup C)$. We say a subgraph is non-empty if it contains at least one edge. We let $E_G(U,W) := \{\{u,v\} \in E : u \in U, v \in W\}$ and let $\deg_G(v)$ give the degree of $v$ in $G$. For $U \subseteq V$, the induced subgraph $G[U]$ on $U$ is $(V, E(U))$ where $E(U)$ is all edges in $E$ with both endpoints in $U$. We drop $G$ subscript when it is clear from context.

\paragraph*{Arboricity.} A forest cover of graph $G$ is a collection of edge-disjoint subgraphs $F_1, F_2,\ldots, F_k $ of $G$ all of which are forests and whose union is $G$. $k$ is the size of the forest cover and $G$ is said to have arboricity $\alpha$ if its minimum size forest cover has size $\alpha$.

\paragraph{Flows.} A \emph{(multicommodity) flow} $F$ in $G$ is a function that assigns to each simple path $P$ in $G$ a flow value $F(P)\ge0$. We say $P$ is a \emph{flow-path} of $F$ if $F(P)>0$. 
%
The \emph{congestion of $F$ on an edge $e$} is defined to be $\congest_{F}(e)=\sum_{P:e\in P}F(P)$. The \emph{congestion} of $F$ is $\max_{e\in E(G)}\congest_{F}(e)$.
The \emph{length} (a.k.a.\ dilation) of $F$ is the maximum length of any of its flow-paths.

\paragraph{Demands.} A \emph{demand} $D:V\times V\rightarrow\mathbb{R}_{\ge0}$ assigns
a non-negative value $D(v,w) \ge 0$ to each ordered pair of vertices in $V$. The load of demand $D$ is defined as $$\load(D) := \max_v \left(\max\left(\sum_{u} D(u,v), \sum_{u} D(v,u) \right)\right).$$ A demand $D$ is \emph{unit} if we have $\load(D) \leq \deg(v)$ for every $v\in V$. The size of a demand is written as $|D|$ and is defined as $\sum_{v,w} D(v,w)$. 
A demand $D$ is called \emph{$h$-length} constrained (or simply \emph{$h$-length}) in $G$ if it assigns positive values only to pairs that are within
distance at most $h$ in $G$. Given a flow $F$, the \emph{demand routed by $F$} is denoted
by $D_{F}$ where, for each $u,v\in V$, $D_{F}(u,v)=\sum_{P\text{ is a path from $u$ to $v$}}F(P)$
is the value of the flow from $u$ to $v$.

\paragraph{Routing Demands and Matchings with Flows.}
We say that a \emph{demand $D$ is routable in $G$ with congestion $\eta$ and dilation $h$} iff there exists a flow $F$ in $G$ with congestion $\eta$ and dilation $h$ where $D_{F}=D$. We say that a matching $M \subseteq E$ is $\delta$-routable with congestion $\eta$ and dilation $h$ if the demand $D_M(u,v) = \delta$ iff $\{u, v\} \in M$ and $0$ otherwise is routable with congestion $\eta$ and dilation $h$; here we imagine that each $e \in M$ has a canonical $u$ and $v$ so that $\load (D_M) \leq \delta$.

\section{High Degree Subgraphs with Routable Matchings}\label{sec: proof of linked ED}
Towards bounding the arboricity of the output of our parallel greedy algorithm, in this section we show that any graph of high minimum degree has a non-empty subgraph of essentially the same minimum degree where any matching can be routed with low dilation and congestion. Specifically, we show the following where $\congslack$ is a fixed universal constant hidden in \Cref{thm: linked ED}; later stated.

\begin{restatable}{lemma}{routSubgraph}\label{lem: linked ED}
There exist universal constants $\conConst$, $\congslack$ such that for every $t \geq 2$ if we let $$\fbox{$\phi :=1/(2 t\cdot n^{\congslack/ t} \cdot \log n)$} \qquad and \qquad \fbox{$\delta :=4\conConst\cdot t\log n/\phi^2 $}$$ then every $n$-vertex graph $H$ of minimum degree $\delta$, has a non-empty subgraph $H'\subseteq H$, such that 
\begin{enumerate}
    \item \textbf{Minimum Degree:} $H'$ has minimum degree at least \fbox{$\delta':= \frac{\phi}{2t}\cdot \delta$}; and
    \item \textbf{Routings:} Any matching $M \subseteq H'$ is $\delta'$-routable in $H'$ with dilation $t$ and congestion $\delta'/2$.
\end{enumerate}
\end{restatable}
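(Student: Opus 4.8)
The plan is to invoke a length-constrained expander decomposition on $H$ to extract the subgraph $H'$, and then use the expansion guarantee to route any matching inside $H'$. First I would recall the length-constrained expander decomposition machinery (this is \Cref{thm: linked ED} referenced in the statement, together with the flow characterization \Cref{thm:flow character}): a $(\phi, t)$-length expander decomposition of $H$ with slack $\congslack$ deletes a set $C$ of at most $|E(H)|/\conststuff$ edges — more precisely, at most a $\phi \cdot \poly\log n$ (times the slack) fraction of edges — so that every remaining connected piece is a $(\phi, t')$-length expander for an appropriate $t' = \Theta(t)$. The point of the parameter choices $\phi := 1/(2t \cdot n^{\congslack/t} \cdot \log n)$ and $\delta := 4\conConst \cdot t \log n / \phi^2$ is precisely that the number of deleted edges, being roughly $\phi$ times the edge count, is small enough that deleting $C$ from $H$ leaves a graph of average (hence, after iterating, minimum) degree still at least $\delta' = \frac{\phi}{2t}\delta$ somewhere.

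Second, I would do the standard "peeling" argument to pass from "small number of edges deleted" to "a nonempty subgraph of min degree $\delta'$". Start from $H - C$; repeatedly remove any vertex of degree less than $\delta'$. Since $H$ had minimum degree $\delta$ and $|C| \le$ (roughly) $\phi t \log n \cdot |E(H)| \le \phi t \log n \cdot n\delta$, and the peeling process removes fewer than $\delta' \cdot n$ further edges in total, one checks that not all edges can be removed as long as $\delta' n + |C| < \tfrac12 \delta n$ or so; the constant $\conConst$ and the shape of $\delta$ are chosen to make this inequality hold. What remains is a nonempty $H'$ with minimum degree $\ge \delta'$. Crucially, I must make sure the peeling stays inside a single expander component of the decomposition — either by peeling within each component separately and keeping the densest surviving one, or by noting that components not meeting $H'$ are irrelevant — so that $H'$ inherits the length-constrained expansion property of the component containing it.

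Third, for the routing guarantee: let $M \subseteq H'$ be any matching. Since each endpoint $v$ of $M$ has $\deg_{H'}(v) \ge \delta'$ and $M$ is a matching, the demand $D_M$ placing value $\delta'$ on each matched pair has $\load(D_M) \le \delta' \le \deg_{H'}(v)$, so $D_M$ is (a scaling of) a unit-type demand, and it is $t$-length-constrained because $M \subseteq H'$ means matched endpoints are at distance $1 \le t$. The flow characterization of length-constrained expanders (\Cref{thm:flow character}, with its constant $\conConst$) then says that in a $(\phi, \Theta(t))$-length expander, every $h$-length unit demand is routable with congestion $O(\log n / \phi)$ and dilation $O(t)$ — after rescaling by $\delta'$ this gives congestion $O(\delta' \log n / \phi)$ and dilation $O(t)$. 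The parameters are tuned so that $O(\delta' \log n / \phi) \le \delta'/2$: indeed $\delta' = \frac{\phi}{2t}\delta = \frac{\phi}{2t} \cdot \frac{4\conConst t \log n}{\phi^2} = \frac{2\conConst \log n}{\phi}$, so $\delta'/2 = \conConst \log n/\phi$ is exactly of the form the flow characterization produces (this is what fixes the "$4$" and the "$\conConst$" in the definition of $\delta$), and the dilation budget $t$ absorbs the $\Theta(t)$ from the decomposition after adjusting the internal constant $\congslack$.

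The main obstacle I expect is bookkeeping the three interlocking constants ($\conConst$ from the flow characterization, $\congslack$ from the expander decomposition, and the leading $4$) so that simultaneously (i) enough edges survive the decomposition to guarantee min-degree $\delta'$ after peeling, and (ii) the congestion bound from the flow characterization lands at exactly $\delta'/2$ rather than some larger constant multiple — the dilation also has to come out to exactly $t$, which forces $\phi$ to carry the $n^{\congslack/t}$ factor so that a $\Theta(t)$-length expander decomposition still certifies $t$-length routing. A secondary subtlety is ensuring the expander decomposition is applied with the right "moving cut"/length parameter so that $t$-length demands (distance exactly $1$ in our case, but the matching could in principle be rerouted up to length $t$) are the ones whose routability is guaranteed; since our demand is genuinely $1$-length this is comfortable, but the statement is phrased for general matchings in $H'$ so I would state the decomposition at length parameter $t$ from the outset.
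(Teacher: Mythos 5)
There is a genuine gap, and it is exactly at the point you flagged as a "subtlety" but then waved away. Your plan is to delete the cut $C$ from the expander decomposition and then peel off low-degree vertices until the remaining graph $H'$ has minimum degree $\delta'$. You note that you must "make sure the peeling stays inside a single expander component... so that $H'$ inherits the length-constrained expansion property of the component containing it." But this inheritance does not hold: a proper subgraph of a length-constrained expander is not, in general, a length-constrained expander. Once you peel even one vertex from a component of $H - C$, you have thrown away edges that the component used for its low-congestion routings, and nothing certifies that the peeled subgraph still has $(1,t)$-length $\phi$-sparsity $\geq \phi$. So the flow characterization (\Cref{thm:flow character}) no longer applies to $H'$, and the routing step in your proposal has no foundation. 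Your edge-count showing that peeling cannot exhaust the graph is correct, but it only yields a high-min-degree subgraph, not a high-min-degree \emph{expander} subgraph, which is what the routing actually needs.

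The paper sidesteps this problem entirely by never peeling. It takes $H'$ to be a connected component of $H - C$ --- components of an expander are expanders, so expansion is inherited --- and then proves directly that every vertex of $H'$ already has degree at least $\delta'$. The tool that makes this direct argument go through is the \emph{linkedness} of the decomposition (a feature of \Cref{thm: linked ED} that your outline never invokes). Linkedness attaches $\ell$ self-loops to a vertex for each unit of cut touching it, and guarantees that the graph remains an expander even after these self-loops are added. The consequence: if a vertex $v$ of $H'$ had degree $<\delta'$, then $v$ must have lost $\geq \delta - \delta'$ edges to $C$, hence carries $\geq (\delta - \delta')\ell \geq \delta$ self-loops in $H - C + L^\ell_C$; the singleton moving cut of size $\leq t\delta'$ around $v$ then separates a $1$-length unit demand of size $\delta$ between $v$ and any neighbor, giving sparsity $\leq t\delta'/\delta = \phi/2 < \phi$, contradicting expansion of $H - C + L^\ell_C$. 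Without linkedness this argument fails --- a low-degree vertex can only source demand proportional to its (small) degree, so the singleton cut is not sparse. So the ingredient you are missing is linkedness; it is what replaces peeling, and it is the reason the lemma invokes a \emph{linked} decomposition rather than a plain one. Your routing step and the arithmetic identity $\delta'/2 = \conConst \log n / \phi$ are otherwise correct (no rescaling is needed, as $D_M$ is already unit and $1$-length).
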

Observe that $\delta'$-routing a matching with dilation $1$ and congestion $\delta'$ is trivial; notably the above improves this to congestion $o(\delta')$. The rest of this section proves \Cref{lem: linked ED} with length-constrained expander decompositions.

\subsection{Preliminary: Length-Constrained Expander Decompositions}

Our proof of \Cref{lem: linked ED} makes use of length-constrained expander decompositions and so we begin by providing background on length-constrained expander decompositions, mostly from \cite{haeupler2022hop}. In short, a length-constrained expander decomposition is a small number of edge length increases so that every length-constrained unit demand in the resulting graph can be routed with low congestion and dilation. 

\subsubsection{Length-Constrained Cuts (a.k.a.\ Moving Cuts)}
We begin by giving formal definitions of length-constrained cuts which will allow us to define length-constrained expanders and length-constrained expander decompositions.

The following is the length-constrained analogue of a cut.
\begin{definition}[Length-Constrained Cut (a.k.a.\ Moving Cut)~\cite{haeupler2020network}]\label{def:movingcut}
An $h$-length cut (a.k.a. $h$-length moving cut) of graph $G = (V,E)$ is a function $C: E \mapsto \{0,\frac{1}{h},\frac{2}{h},\dots,1\}$. The \emph{size} of $C$ is $|C|:=\sum_{e} C(e)$. Any length-constrained cut with support in $\{0,1\}$ is called pure. 
\end{definition}

\noindent The following is the result of applying a length-constrained cut in a graph.
\begin{definition}[$G-C$]
For graph $G = (V,E)$ with edge length function $\l_G$ and length-constrained cut $C$, we let $G-C$ be $G$ with the edge-length function which assigns $e \in E$ value $\l_G(e) + h \cdot C(e)$. If $C$ is a pure cut then $G-C$ is $G$ with all edges in the support of $C$ deleted. 
\end{definition}

\noindent The following gives the length-constrained analogue of separating a demand.

\begin{definition}[$h$-Length Separated Demand]\label{dfn:sepDem}
For any demand $D$ and any $h$-length cut $C$, we define the amount of $h$-length separated demand as 
$$\sep_{h}(C,D) = \sum_{u,v : \dist_{G-C}(u,v)>h} D(u,v).$$
\end{definition}

\noindent Using demand separation, we can define cut sparsity in the length-constrained setting.
\begin{definition}[$h$-Length Sparsity of a Cut $C$ for Demand $D$]\label{dfn:CDSparse}
For any demand $D$ and any $h$-length cut $C$ with $\sep_{h}(C,D)>0$, the $h$-length sparsity of $C$ with respect to $D$ is
$$\spa_{h}(C,D) = \frac{|C|}{\sep_{h}(C,D)}.$$
\end{definition}

Likewise, we can define the overall sparsity of a length-constrained cut as follows.
\begin{definition}[$(h,s)$-Length Sparsity of a Cut]\label{def:sparsity}
The $(h,s)$-length sparsity of $hs$-length cut $C$ is:
$$\spa_{(h,s)}(C) = \min_{\text{h-length unit demand}\ D} \spa_{s \cdot h} (C,D).$$
\end{definition}


\subsubsection{Length-Constrained Expanders}\label{sec:h-length-expander-def}

We now move on to formally defining length-constrained expanders. Informally, they are graphs with no sparse length-constrained cuts.

\begin{definition}[$(h,s)$-Length $\phi$-Expanders]
A graph $G$ is an $(h,s)$-length $\phi$-expander if every $hs$-length cut has $(h,s)$-length sparsity at least $\phi$.
\end{definition}

We now summarize the crucial properties of length-constrained expanders, namely the fact that they admit low congestion and dilation routings (see Lemma 3.16 of \cite{haeupler2022hop}).

\begin{theorem}
[Routing Characterization of Length-Constrained Expanders, \cite{haeupler2022hop}]\label{thm:flow character} Given graph $G$, for any $h \geq 1$, $\phi < 1$, and $s \geq 1$, there exists a universal constant $\conConst$:
\begin{itemize}
\item \textbf{Length-Constrained Expanders Have Good Routings:} If $G$ is an $(h,s)$-length $\phi$-expander, then every $h$-length unit demand can be routed in $G$ with congestion at most $\conConst \cdot \log n /\phi$ and dilation at most $s\cdot h$.
\item \textbf{Not Length-Constrained Expanders Have an Unroutable Demand:} If $G$ is not an $(h,s)$-length $\phi$-expander, then some $h$-length unit demand cannot be routed in $G$ with congestion at most $1/2\phi$ and dilation at most $\frac{s}{2}\cdot h$.
\end{itemize}
\end{theorem}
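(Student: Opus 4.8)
The statement is the standard length-constrained max-flow/min-cut dichotomy for expanders, and I would prove its two bullets separately. The second bullet (``not an expander $\Rightarrow$ unroutable demand'') is a short averaging argument. If $G$ is not an $(h,s)$-length $\phi$-expander then, unwinding \Cref{def:sparsity} and \Cref{dfn:CDSparse}, there is an $sh$-length cut $C$ and an $h$-length unit demand $D$ with $|C| < \phi\cdot \sep_{sh}(C,D)$, and I claim $D$ is the demand we want. Suppose some flow $F$ routes $D$ with congestion at most $1/(2\phi)$ and dilation at most $sh/2$. For a flow-path $P$ of $F$ we have $\ell_{G-C}(P) = \ell_G(P) + sh\cdot\sum_{e\in P}C(e) \le sh/2 + sh\sum_{e\in P}C(e)$, so any flow-path carrying flow between a pair $(u,v)$ with $d_{G-C}(u,v) > sh$ must have $\sum_{e\in P}C(e) > 1/2$; hence $\sum_{P}F(P)\sum_{e\in P}C(e) \ge \tfrac12 \sep_{sh}(C,D)$. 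Exchanging the order of summation, $\sum_P F(P)\sum_{e\in P}C(e) = \sum_e C(e)\cdot \congest_F(e) \le \tfrac{1}{2\phi}|C|$, and the two bounds contradict $|C| < \phi\cdot\sep_{sh}(C,D)$.

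For the first bullet (``expander $\Rightarrow$ good routing''), fix an $h$-length unit demand $D$ and set $L := sh$. Let $\lambda^\star$ be the optimum of the LP routing the largest fraction $\lambda$ of $D$ using only paths of $\ell_G$-length at most $L$ subject to congestion at most $1$; since $D$ is $h$-length this LP is feasible with $\lambda^\star>0$, and if $\lambda^\star\ge1$ we are trivially done, so assume $\lambda^\star<1$. Scaling an optimal fractional solution by $1/\lambda^\star$ and deleting excess flow shows $D$ is routable with congestion $1/\lambda^\star$ and dilation $L$, so it suffices to prove $\lambda^\star = \Omega(\phi/\log n)$. LP duality gives $\lambda^\star = \min\{\sum_e z_e : z\ge0,\ \sum_{u,v}D(u,v)\,d^L_z(u,v)\ge1\}$, where $d^L_z(u,v)$ is the least $z$-length of a $u$--$v$ path of $\ell_G$-length at most $L$. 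Fix an optimal $z$, put $Z:=\sum_e z_e = \lambda^\star < 1$, and suppose for contradiction that $Z \ll \phi/\log n$. From $z$ and a scale $\theta>0$ build an $sh$-length cut $C_\theta$ by setting $C_\theta(e) = 1$ if $z_e/\theta\ge1$, $C_\theta(e)=0$ if $z_e/\theta<1/(sh)$, and otherwise rounding $z_e/\theta$ up to the nearest multiple of $1/(sh)$; then $C_\theta(e)\le 2z_e/\theta$ so $|C_\theta|\le 2Z/\theta$, while any $u$--$v$ path of $\ell_{G-C_\theta}$-length at most $sh$ has $\ell_G$-length at most $sh$ and $z$-length at most $2\theta$, so every pair with $d^L_z(u,v)>2\theta$ is $sh$-separated by $C_\theta$, i.e.\ $\sep_{sh}(C_\theta,D)\ge g(2\theta)$ with $g(x):=\sum_{u,v:\,d^L_z(u,v)>x}D(u,v)$.

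It remains to pick a good $\theta$. The layer-cake identity gives $\int_0^\infty g(x)\,dx = \sum_{u,v}D(u,v)\,d^L_z(u,v) = 1$, $g$ is non-increasing and vanishes for $x\ge Z$, and $g(0)\le|D|\le\sum_v\deg_G(v)\le n^2$ since $D$ is a unit demand. Partitioning $[\theta_{\min},\infty)$ with $\theta_{\min}:=1/(2|D|)$ into $O(\log n)$ dyadic intervals and using $\int_0^{\theta_{\min}}g\le\theta_{\min}|D|\le\tfrac12$, pigeonhole yields a dyadic $\theta_0$ with $\theta_0\,g(\theta_0)\ge\int_{\theta_0}^{2\theta_0}g\ge\Omega(1/\log n)$. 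Running the construction with $2\theta:=\theta_0$ gives $\spa_{sh}(C_\theta,D) \le (2Z/\theta)/g(2\theta) = 4Z/(\theta_0\,g(\theta_0)) = O(Z\log n)$, so by \Cref{def:sparsity} (and since $D$ is an $h$-length unit demand) $\spa_{(h,s)}(C_\theta)\le\spa_{sh}(C_\theta,D)=O(Z\log n)$; the expander hypothesis then forces $\phi\le O(Z\log n)$, contradicting $Z\ll\phi/\log n$. Choosing $\conConst$ to absorb the hidden constants gives routing with congestion at most $\conConst\log n/\phi$ and dilation $L=sh$, completing the first bullet.

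The routine parts are the second bullet and the LP-duality bookkeeping; the crux is turning an arbitrary nonnegative dual length function $z$ into a genuine $\{0,1/sh,\dots,1\}$-valued moving cut that is simultaneously cheap, $|C_\theta|=O((\sum_e z_e)/\theta)$, and separating, $\sep_{sh}(C_\theta,D)=\Omega(g(2\theta))$. Two features force the $\log n$ loss and need care: moving-cut granularity (edges with $z_e\ll\theta/sh$ cannot be represented and get zeroed out, which is why the separation threshold degrades to $2\theta$, and why the unit-demand hypothesis $|D|\le n^2$ is used---to keep the number of dyadic scales $O(\log n)$), and the spreading of the dual mass across scales, which necessitates the pigeonhole. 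I expect these to be the only genuinely delicate points.
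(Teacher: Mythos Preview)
The paper does not prove \Cref{thm:flow character}; it is stated as a black-box citation of Lemma~3.16 of \cite{haeupler2022hop} (see the sentence immediately preceding the theorem statement). So there is no in-paper argument to compare your proposal against.

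That said, your proof sketch is the standard approximate length-constrained max-flow/min-cut argument and is essentially correct. The second bullet is exactly the usual averaging argument. For the first bullet, your LP-duality plus threshold-rounding approach is the expected route; the one place to be slightly more careful is the claim that a path of $\ell_{G-C_\theta}$-length at most $sh$ has $z$-length at most $2\theta$: your bound $|P|\cdot\theta/(sh)\le\theta$ on the contribution of zeroed-out edges silently uses that the number of edges on $P$ is at most $sh$, which holds when edge lengths are positive integers (the regime this paper actually works in) but needs a word of justification---or a different bucketing---if $\ell_G$ can assign arbitrarily small lengths. With that caveat, the pigeonhole over $O(\log n)$ dyadic scales and the resulting $\Omega(\phi/\log n)$ lower bound on $\lambda^\star$ go through as you describe.
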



\subsubsection{Length-Constrained Expander Decompositions}

Having defined length-constrained expanders, we can now define length-constrained expander decompositions. Informally, these are length-constrained cuts whose application renders the graph a length-constrained expander. 

More specifically, we will make use of a strengthened version of length-constrained expander decompositions called ``linked'' length-constrained expander decompositions. Informally, this is a length-constrained expander decomposition which renders $G$ length-constrained expanding even after adding many self-loops. This is a strengthened version because adding self-loops only makes it harder for a graph to be a length-constrained expander. 


The following definition gives the self-loops we will add for a length-constrained expander decomposition $C$.
\begin{definition}[Self-Loop Set $L^{\ell}_C$]
Let $C$ be an $h$-length cut of a graph $G = (V,E)$ and let $\ell$ be a positive integer divisible by $h$. For any vertex $v$, define $C(v)=\sum_{e \ni v}C(e)$. The self-loop set $L^{\ell}_C$ consists of $C(v)\cdot \ell$ self-loops at $v$. We let $G+L^{\ell}_C := (V, E \cup L^{\ell}_C)$.
\end{definition}
\noindent Using the above self-loops, we can now define linked length-constrained expander decompositions.

\begin{definition}[Linked Length-Constrained Expander Decomposition]
Let $G$ be a graph. An $\ell$-linked $(h,s)$-length $\phi$-expander decomposition with cut slack $\kappa$ is an $hs$-length cut $C$ such that $|C|\le \kappa \cdot \phi m$ and $G + L^{\ell}_C - C$ is an $(h,s)$-length $\phi$-expander.
\end{definition}
Prior work of \cite{haeupler2022hop} demonstrated that for each $\phi$, a length-constrained expander decomposition always exist with length slack $s = \Omega(n)$, cut slack $\kappa = O(\log n)$ and linkedness $\ell \geq \Omega(1/\phi\cdot\log n)$.

\subsection{Existence of Length-Constrained Expander Decompositions}
In this work we will use the existence of pure length-constrained expanders which trade-off between $s$ and $\kappa$. Namely, we show the following theorem; our proof essentially follows that of \cite{haeupler2022hop} but accounts for the pureness of our decompositions (paying an extra $h$ in $\kappa$) and uses a more general form of the ``exponential demand''.
\begin{theorem}[Length-Constrained Expander Decompositions]
\label{thm: linked ED}
    There exists a constant $\congslack$ such that given graph $G = (V,E)$ with edge lengths, $h \geq 1$, $s \geq 100$, $\phi \geq 0$ and any $\ell \leq 1/(100 \phi \log n)$, there exists a pure $\ell$-linked $(h,s)$-length $\phi$-expander decomposition with cut slack $hs \cdot n^{\congslack/s} \cdot \log n$.
\end{theorem}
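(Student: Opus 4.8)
The plan is to follow the standard ``repeated sparse cut extraction'' strategy for building expander decompositions, but track the pureness of the cut and use a parametrized exponential demand to get the $n^{\congslack/s}$ dependence in the cut slack. Concretely, I would maintain a current pure cut $C$ (initially the all-zero cut) and a current ``active'' graph $G - C$, and repeat: if $G + L^{\ell}_C - C$ is already an $(h,s)$-length $\phi$-expander, stop; otherwise, by the ``hard direction'' of \Cref{thm:flow character}, there is an $h$-length unit demand $D$ that cannot be routed in $G + L^{\ell}_C - C$ with congestion $1/(2\phi)$ and dilation $\tfrac{s}{2} h$, and by LP duality (max-flow/min-cut in the length-constrained setting, as in \cite{haeupler2022hop, haeupler2020network}) there is then an $hs$-length cut that is sparse with respect to $D$, i.e.\ of $sh$-length sparsity $< \phi$. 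Round this fractional cut to a \emph{pure} cut by a standard threshold/geometric-scaling argument — this is where we pay the extra factor of $h$ in $\kappa$, since moving each edge by $h$ in length is what a pure deletion costs — and add its support to $C$. The key potential that makes this terminate is the total size $|C|$, which only increases, combined with an upper bound argument showing it cannot exceed $\kappa \phi m$.

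The heart of the argument — and the step I expect to be the main obstacle — is bounding the number of iterations / the final value of $|C|$ by $hs \cdot n^{\congslack/s} \cdot \log n \cdot \phi m$. The classical trick here (from \cite{haeupler2022hop}) is a charging argument against an ``exponential demand'': one considers, for a parameter governing how fast weights decay with distance, the demand $D^{\exp}_v(w) \propto \deg(w)\cdot (\text{base})^{-d(v,w)}$ restricted to the $h$-ball around $v$, which is $O(\log n)$-close to a unit demand once summed appropriately. Each sparse cut we extract separates a constant fraction of some exponential demand, and since the total exponential demand is bounded in terms of $m$ and a $\log n$ factor (the normalization), the process must terminate. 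Choosing the base of the exponential as a function of $s$ — rather than the fixed constant base used in \cite{haeupler2022hop} — is exactly what converts the $\Omega(n)$ length slack into the trade-off $s$ versus $n^{\congslack/s}$: a slowly-decaying exponential needs more hops (larger $s$) but charges more efficiently (smaller $n^{O(1/s)}$). I would set up this generalized exponential demand, verify it is $O(\log n \cdot n^{O(1/s)})$-dominated by a scaling of a unit demand when confined to $sh$-length balls, and then run the charging.

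Finally I would assemble the pieces: verify that the output cut $C$ is pure (immediate from the rounding step), that $G + L^{\ell}_C - C$ is an $(h,s)$-length $\phi$-expander (the stopping condition), that the linkedness hypothesis $\ell \le 1/(100\phi\log n)$ is exactly what is needed so that the self-loops $L^{\ell}_C$ do not overwhelm the routing in the charging step (adding $C(v)\cdot \ell$ self-loops increases the demand load at $v$ by a controlled amount, and the $1/(100\phi\log n)$ bound keeps this below the congestion budget), and that $|C| \le hs\cdot n^{\congslack/s}\cdot\log n \cdot \phi m$ with $\congslack$ the absorbed constant. The requirement $s \ge 100$ and the $1/100$ slack in the linkedness bound are the usual ``enough room for constants'' conditions and should drop out of the computation without difficulty; the only genuinely delicate point is getting the $n^{\congslack/s}$ factor rather than $n^{O(1)}$ out of the exponential-demand charging, which is why I would spend most of the effort there.
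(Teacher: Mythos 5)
Your plan is essentially the same as the paper's: iteratively extract $\phi$-sparse $hs$-length moving cuts, round each to a pure cut, add the corresponding self-loops, and bound the total cut size by charging against an exponential demand whose decay rate scales with $sh$ so that the ``wastage'' is only $n^{O(1/s)}$ rather than $n^{O(1)}$ — with the linkedness bound $\ell\le 1/(100\phi\log n)$ entering precisely to keep the self-loop contribution to the charge subdominant.

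Two small but concrete slips are worth flagging. First, rounding the fractional moving cut to its pure support loses a factor of $hs$, not $h$: by \Cref{def:movingcut} and \Cref{def:sparsity} the cut is an $hs$-length cut taking values in multiples of $1/(hs)$, so $|\bar C|\le hs\cdot|C|$; this is exactly where the $hs$ factor in the cut slack $hs\cdot n^{\congslack/s}\cdot\log n$ comes from, and paying only $h$ would yield a bound stronger than what is actually proved. Second, ``each sparse cut separates a constant fraction of the exponential demand'' is not quite the mechanism: the paper instead shows (via its Claim~\ref{clm: sparse wrt D}) that a cut $(h,s)$-sparser than $\phi$ separates at least $|C|/(2\phi n^{4/s})$ units of the exponential demand, and then balances this against a potential that assigns weight $\ln w(e)\cdot 4hs\,n^{4/s}$ to each edge and $1/\phi$ to each unit of moving cut applied; termination follows because this potential never increases, not because any fixed fraction of the demand disappears per round. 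Finally, you do not need to route through \Cref{thm:flow character} and LP duality to find the sparse cut — non-expansion is by definition the existence of a cut of $(h,s)$-length sparsity below $\phi$, and the paper just takes it directly.
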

\begin{proof}
We denote by $\dist(\cdot,\cdot)$ the shortest-path distance between edges in $G$. That is, for a pair $e,e'\in E(G)$, $\dist(e,e')$ is the smallest total length of any path starting with $e$ and ending with $e'$. 

We now define a demand $D$ on $G$ as follows. 
For a pair $e,e'$ of edges in $G$, we set $w(e,e')=n^{-\dist(e,e')/(sh/2)}$ if $\dist(e,e')\le sh/2$, otherwise we set $w(e,e')=0$. Denote $w(e)=\sum_{e'}w(e,e')$, and define $D(e,e')=w(e,e')/w(e)$, so for each $e$, $\sum_{e'}D(e,e')=1$. 
Intuitively, we can think of $D$ as a demand that sends, for each pair $e,e'$, $D(e,e')$ units of flow from $e$ to $e'$. Formally, we define the demand on vertices as follows. For every pair $v,v'$ of vertices in $G$, we send $D(v,v')=\sum_{e\sim v,e'\sim v'}D(e,e')$ units of flow from $v$ to $v'$. As $\sum_{e'}D(e,e')=1$ holds for all $e$, we get that the demand $D$ we defined on pairs of vertices is unit.
We prove the following two claims.

\begin{claim}
For every pair $e,e'$ with $\dist(e,e')\le h$, $\sum_{e''}\min\set{D(e,e''),D(e',e'')}\ge n^{-4/s}$.
\end{claim}
\begin{proof}
Observe that, for every $e''$, 
\[w(e',e'')=n^{-\frac{\dist(e',e'')}{(sh/2)}}\ge n^{-\frac{\dist(e,e')+\dist(e,e'')}{(sh/2)}}\ge n^{-\frac{h}{(sh/2)}}\cdot n^{-\frac{\dist(e,e'')}{(sh/2)}}= n^{-2/s}\cdot w(e,e''),\]
and as a corollary,
\[w(e')=\sum_{e''}w(e',e'')=\sum_{e''}n^{-\frac{\dist(e',e'')}{(sh/2)}}\ge \sum_{e''}n^{-2/s}\cdot w(e,e'')=n^{-2/s}\cdot w(e).\]
Therefore,
\[\begin{split}
\sum_{e''}\min\set{D(e,e''),D(e',e'')} & \ge  \sum_{e''}\min\set{\frac{w(e,e'')}{w(e)},\frac{w(e',e'')}{w(e')}} \\
& \ge  \sum_{e''}\frac{n^{-2/s}\cdot w(e,e'')}{n^{2/s}\cdot w(e)} \\
& \ge  n^{-4/s}.
\end{split}
\]
\end{proof}

\begin{claim}
\label{clm: sparse wrt D}
Let $C$ be any moving cut with $\spa_{(h,s)}(C)\le \phi$. Then $\spa_{h,s/2}(C,D)\le 2\phi n^{4/s}$.
\end{claim}
\begin{proof}
Let $D^*$ be a unit $h$-hop demand with $\spa_{(h,s)}(C,D^*)\le \phi$. As $D^*$ is a unit demand, we first distributed the demand on each vertex to its incident edges, such that each edge sends and receives at most $1$ unit of flow. Consider the resulting demand $D^{**}$ on edges.
Note that, by concatenating a demand of $D^{**}(e,e')\cdot \min\set{D(e,e''),D(e',e'')}$ from $e$ to every $e''$ and a demand from $e''$ to $e'$ with the same amount, where we essentially applied the demand $D$ twice (where the second time we use it in the opposite direction), we obtain a demand that sends at least $D^{**}(e,e')\cdot n^{-4/s}$ units of flow from $e$to $e'$, for every pair $e,e'$.
If $e$ and $e'$ are at
distance at least $sh$ in $G-C$ then either the pair $e,e''$ or the pair $e'', e$ must be at distance
at least $sh/2$. Therefore,
$\sep_{hs/2}(C,D)\ge (n^{-4/s}/2)\cdot \sep_{hs}(C,D^*)$, and so $\spa_{(h,s/2)}(C,D)\le 2n^{-4/s}\cdot \spa_{(h,s)}(C,D^*)\le 2\phi n^{4/s}$.
\end{proof}

We now provide the proof of \Cref{thm: linked ED} by a simple algorithm. While there exists a moving cut $C$ in $G$ that is $(h,s)$-hop $\phi$-sparse, we let $\bar C$ be the pure cut corresponding to $C$ (that is, $\bar C$ contains all edges with non-zero values in $C$, so $|\bar C|\le hs\cdot |C|$), update $G\leftarrow G-\bar C+L^{\ell}_{\bar C}$ and continue. It is easy to verify that our algorithm ends with $G$ being an $\ell$-linked $(h,s)$-hop $\phi$-expander. 

It suffices to count the total number of deleted edges, for which we define a potential function as follows. For each edge (including self-loop) in $G$, we define its potential to be $\ln(w(e))\cdot 4hsn^{4/s}$. For every unit of moving cut that we applied to $G$, we define its potential to be $1/\phi$. Initially, the total potential is at most $|E(G)|\cdot \log |E(G)|\cdot 4hs\cdot n^{4/s}$ (each edge has potential at most $\log |E(G)|\cdot 4hs\cdot n^{4/s})$ and there is no moving cut applied. We will show that, after each iteration (where a cut is applied to $G$ and corresponding self-loops are added), the potential never increases. Note that this implies that the total number of deleted edges is at most $\phi\cdot|E(G)|\cdot \log |E(G)|\cdot 4hs\cdot n^{4/s}$, completing the proof of \Cref{thm: linked ED} (as $|E(G)|\le n^2$).

According to the algorithm, in an iteration where a sparse cut $C$ is found, we apply $\bar C$ to $G$ and add self-loops, which increases the potential by at most
\[|C|\cdot hs/\phi + |C|\cdot hs\cdot \ell\log n\le 2\cdot |C|\cdot hs/\phi,\]
as $\ell \leq 1/(100 \phi \log n)$.
On the other hand, the potential decrease is at least
\[
4n^{4/s}\cdot hs\cdot \sum_{e\in E(G)}\ln (w(e)) - \ln (w'(e))\ge 4n^{4/s}\cdot hs\cdot \sum_{e\in E(G)}\frac{w(e) - w'(e)}{w(e)},
\]
where $w'(e)$ represents the weight of $e$ in the updated graph $G$.
From \Cref{clm: sparse wrt D}, we know that the $(h,s/2)$-hop sparsity of $C$ for $D$ is at most $2\phi n^{4/s}$. This means that $\sep_{sh/2}(C,D)\ge |C|/(2\phi n^{4/s})$.
On the other hand, by definition of $D$, the right-hand-side of the above inequality is at least the amount of demand that is $sh/2$-separated by $C$ (as those demand will have weight $w'(e,e')=0$ according to our definition), and so is at least $|C|/(2\phi n^{4/s})$. Altogether, we get that the potential decrease is at least 
\[
4n^{4/s}\cdot hs\cdot |C|/(2\phi n^{4/s})\ge 2\cdot |C|\cdot hs/\phi,
\]
which is an upper bound of the potential increase. This completes the proof that the potential never increases.
\end{proof}

\subsection{High Degree Routable Matching Subgraphs (\Cref{lem: linked ED} Proof)}
We conclude this section with our proof of \Cref{lem: linked ED}.

\routSubgraph*
\begin{proof}
The basic idea of the proof is as follows. We first take a linked length-constrained expander decomposition. Then, the result of applying this length-constrained expander decomposition must have high minimum degree because if any vertex $v$ has its degree drop too low we can find a sparse length-constrained cut (namely the singleton cut separating that $v$). Likewise, matchings are routable with low dilation and congestion because the result of applying our length-constrained expander decomposition is a length-constrained expander.

We begin by more formally describing our length-constrained expander decomposition. Specifically, we apply \Cref{thm: linked ED} to graph $H$ with parameters $h=1$, $s=t$ and $\ell=1/(100 \phi\log n)$. We let $C$ be the resulting pure cut which is a $1/(100\phi\log n)$-linked $(h,s)$-length $\phi$-expander decomposition for $H$ with cut slack $t \cdot n^{\congslack/t} \cdot \log n$.

We first define $H'$. Observe that by our choice of $\phi$ we have
\begin{align*}
    |C| & \leq t \cdot n^{\congslack/t} \cdot \log n \cdot \phi \cdot |E(H)|\\
    &\leq \frac{|E(H)|}{2}
\end{align*}
and therefore $H- C$ contains at least one edge. We let $H'$ be an arbitrary connected component of $H-C$ with at least one edge. It remains to show that $H'$ satisfies the properties required in \Cref{lem: linked ED}.




We first show that the minimum degree in $H'$ is at least $\delta'$. Assume for the sake of contradiction that there is a vertex $v$ in $H'$ such that $\deg_{H'}(v) < \delta'$. Since $H'$ is a connected graph with at least two vertices, $v$ has at least one neighbor in $H'$; let $u$ be an arbitrary such neighbor. By our choice of $\phi$ and $l$, we have that 
\begin{align}\label{eq:aa}
 l = 1/(100\phi \log n) = t \cdot n^{\congslack/t}/50 = \omega(1) \geq 2   
\end{align}
and
\begin{align}\label{eq:ab}
    \delta' = \frac{\phi}{2t} \cdot \delta = \frac{1}{4t^2 \cdot n^{\congslack/t}\cdot \log n} \cdot \delta\leq \frac{\delta}{2}.
\end{align}
Thus, from the definition of the set $L_C^{\ell}$ and our choice of $\delta'$ and $\phi$ and applying Equations \ref{eq:aa} and \ref{eq:ab}, the number of self loops at $v$ in $H-C+L_C^\ell$ is at least
\begin{align*}
    (\delta-\delta') \cdot l \geq \delta.
\end{align*}
Likewise, the degree of $u$ in $H-C+L_C^\ell$ is at least $\delta$ since in $H$ it has degree at least $\delta$, and if the cut $C$ assigns non-zero values to $d$ of its incident edges, then $d\ell>d$ self-loops will be added to $u$. 

Next, consider the demand $D_0$ which gives value $\delta$ to $(u,v)$ and $0$ to all other pairs of vertices. Since both $u$ and $v$ have degree at least $\delta$ in $H-C+L_C^\ell$, it follows that $D_0$ is unit in $H-C+L_C^\ell$. Likewise, since $u$ and $v$ are adjacent in $H'$, $D_0$ is a $1$-length demand in $H-C+L_C^\ell$.

Let $C_0$ be the length-constrained cut that assigns value $t$ to all edges in $E_{H'}(\set{v},V(H')\setminus \set{v})$ and value $0$ to all other edges. Since $v$ has degree at most $\delta'$ in $H'$ we have 
\begin{align}\label{eq:ba}
    |C_0| \leq t \cdot \delta'.
\end{align}

However, $C_0$ $t$-separates $D_0$ in $H-C+L_C^\ell$ and since $D_0$ is a unit $1$-length demand and $H-C+L_C^\ell$ is a $(1,t)$-length $\phi$-expander it follows that
\begin{align}\label{eq:bb}
    |C_0| \geq \phi \cdot |D_0| = \phi \cdot \delta.
\end{align}
Combining Equations \ref{eq:ba} and \ref{eq:bb} we get
\begin{align*}
    \phi \cdot \delta \leq t \cdot \delta '
\end{align*}
which contradicts the definition of $\delta'$ as $\frac{\phi}{2t} \cdot \delta$.

It remains to show that any matching in $H'$ is $\delta'$-routable with dilation $t$ and congestion at most $\delta'/2$. Consider matching $M \subseteq H'$ and let $D_M$ be the demand which assigns $(u,v)$ value $\delta'$ for each $\{u,v\} \in M$ (for a canonical ordering of the vertices). $D_M$ is $1$-length by construction and unit in $H'$ since every vertex in $H'$ has degree at least $\delta'$. Furthermore, $H'$ is a $(1,t)$-length $\phi$-expander since $H-C+L_C^\ell$ is and so $D_M$ can be routed with dilation $t$ and congestion  
\begin{align*}
\conConst\cdot\log n/\phi\le  \delta'/2
\end{align*}
in $H'$.
\end{proof}

\section{Bounding the Arboricity of Parallel Greedy Graphs}
Having shown how to route in high minimum degree graphs in the previous section, in this section we bound the arboricity of any graph constructed in the manner of the parallel greedy algorithm. We abstract out such graphs with the notion of a parallel greedy graph.

\begin{definition}[$t$-$\pg$ Graphs]
    Let $V$ be a set of vertices. We say that a sequence  of edge sets $(E_1,\ldots,E_k)$ on $V$ is $t$-$\pg$ for some integer $t\ge 2$, iff for each $i \in [k]$,
\begin{itemize}
    \item $E_i$ is a matching on $V$; and
    \item $\dist_{G_{i-1}}(u,v)> t$ for each $(u,v)\in E_i$ where $G_{i-1}$ the graph on $V$ with edges $\bigcup_{1\le j\le i-1}E_j$.
\end{itemize}
We say that a graph $G = (V,E)$ is $t$-$\pg$ iff its edge set $E$ is the union of some $t$-$\pg$ sequence on $V$.
\end{definition}

The following summarizes our bound on the arboricity of $t$-\pg graphs.
\begin{restatable}{theorem}{gPGArb}\label{thm: main}
    Every $t$-\pg graph on $n$ vertices has arboricity $t^3\cdot \log^3n \cdot n^{O(1/t)}$.
\end{restatable}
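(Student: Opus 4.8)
The plan is to prove the claim in contrapositive form, via contradiction: if a $t$-$\pg$ graph $G$ had arboricity at least $\delta := 4\conConst\cdot t\log n/\phi^2$, with $\phi := 1/(2t\cdot n^{\congslack/t}\cdot\log n)$ as in \Cref{lem: linked ED}, I would produce an edge of the defining $t$-$\pg$ sequence that in fact closes a path of length at most $t$ among the earlier edges, contradicting $t$-$\pg$-ness. Substituting the values of $\phi$ and $\delta$ at the end gives $\delta = 16\conConst\cdot t^3\log^3 n\cdot n^{2\congslack/t} = t^3\log^3 n\cdot n^{O(1/t)}$, so the conclusion ``arboricity $<\delta$'' is exactly the claimed bound.

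First I would pass to a high-minimum-degree subgraph. By the Nash--Williams arboricity formula, arboricity at least $\delta$ means some induced subgraph on a vertex set $U$ has more than $(\delta-1)(|U|-1)$ edges; the invariant ``$|E|>(\delta-1)(|V|-1)$'' is preserved under deleting any vertex of degree at most $\delta-1$, so this peeling process cannot exhaust $U$ and must halt at a nonempty subgraph $H$ of minimum degree at least $\delta$. Applying \Cref{lem: linked ED} to $H$ yields a nonempty subgraph $H'\subseteq H\subseteq G$ of minimum degree at least $\delta'$ such that every matching $M\subseteq H'$ is $\delta'$-routable in $H'$ with dilation $t$ and congestion $\delta'/2$. (If $H$ has fewer than $n$ vertices one simply applies the lemma with $H$'s own vertex count; the bounds only improve.)

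Next, fix a $t$-$\pg$ sequence $(E_1,\dots,E_k)$ with $E(G)=\bigcup_i E_i$ and let $i$ be the \emph{largest} index with $E_i\cap E(H')\ne\emptyset$, which exists since $H'$ is nonempty. Set $M:=E_i\cap E(H')$, a nonempty matching contained in $H'$, and let $F$ be a flow in $H'$ realizing the demand $D_M$ (value $\delta'$ on each pair of $M$) with every flow-path of length at most $t$ and congestion at most $\delta'/2$ on every edge. The crucial step is the following counting argument. Since $\congest_F(e)\le\delta'/2$ for each of the $|M|$ edges $e\in M$, we have $\sum_{e\in M}\congest_F(e)=\sum_P F(P)\cdot|P\cap M|\le|M|\cdot\delta'/2$; every flow-path meeting $M$ contributes at least its value to the left-hand side, so the total value of flow-paths meeting $M$ is at most $|M|\cdot\delta'/2$, strictly less than $|D_M|=|M|\cdot\delta'$. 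Hence some flow-path $P$ with $F(P)>0$ avoids $M$ entirely; it has length at most $t$ and joins endpoints $u,v$ with $\{u,v\}\in M\subseteq E_i$.

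Finally I would observe that $P$ in fact lies inside $G_{i-1}$: every edge of $P$ lies in $E(H')\subseteq\bigcup_{j\le i}E_j$ by the maximality of $i$, and any edge of $P$ lying in $E_i$ would lie in $E_i\cap E(H')=M$, which $P$ avoids; so all edges of $P$ lie in $\bigcup_{j\le i-1}E_j=E(G_{i-1})$. Thus $\dist_{G_{i-1}}(u,v)\le t$, contradicting the defining property of the $t$-$\pg$ sequence for the pair $\{u,v\}\in E_i$. I expect the counting step, together with the observation that ``avoiding $M$ within $H'$'' coincides with ``living in $G_{i-1}$'' by the maximality of $i$, to be the only real subtlety: the key point is that congestion $\delta'/2$ spread over the $|M|$ edges of $M$ forces a flow-path avoiding \emph{all} of $M$, not merely the single edge $\{u,v\}$ (which is what a trivial congestion-$\delta'$ routing would fail to give). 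Everything else is a direct appeal to \Cref{lem: linked ED} and routine arithmetic.
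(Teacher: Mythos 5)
Your proof is correct and follows the same strategy as the paper: extract a high-minimum-degree subgraph from the high-arboricity assumption, apply \Cref{lem: linked ED} to get a subgraph $H'$ where matchings route with congestion $\delta'/2$ and dilation $t$, and use the congestion bound to find a flow-path of length at most $t$ that avoids the last-added matching, contradicting $t$-$\pg$-ness. The only cosmetic difference is that where the paper invokes \Cref{lem: pg subgraph} to view $H'$ itself as a $t$-$\pg$ graph and works with the last set of its sequence, you keep the original sequence for $G$, take the largest index $i$ with $E_i\cap E(H')\ne\emptyset$, and inline the observation that the $M$-avoiding flow-path must lie in $G_{i-1}$ — which is exactly the content of that lemma.
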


The remainder of this section is dedicated to a proof of \Cref{thm: main}. In particular, we will observe that if our parallel greedy graph has high arboricity then it has a high minimum degree subgraph and this subgraph admits low congestion and dilation routings (by \Cref{lem: linked ED}). We will then use these routings to contradict the parallel greediness of the input graph.

\subsection{Preliminary: Minimum-Degree Subgraphs from Arboricity}
We begin by noting a known fact about high minimum degree subgraphs of high arboricity graphs.
\begin{restatable}{lemma}{highMinDeg}\label{lem:highMinDeg}
    Let $G = (V,E)$ be a graph with arboricity $\alpha$. Then, $G$ has a non-empty induced subgraph with minimum degree at least $\alpha/2$.
\end{restatable}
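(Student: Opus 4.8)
The plan is to prove \Cref{lem:highMinDeg} by the standard iterative-pruning (peeling) argument, combined with a counting argument that relates arboricity to edge density.

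First I would recall the elementary relationship between arboricity and average degree: if $G$ has arboricity $\alpha$, then any subgraph $G'$ on $n'$ vertices has at most $\alpha(n'-1) < \alpha n'$ edges, since each of the $\alpha$ forests in a forest cover contributes at most $n'-1$ edges to $G'$. (Conversely, by Nash--Williams, arboricity is within a factor of $2$ of the maximum edge-to-vertex density over subgraphs, but we only need the easy direction here.) In particular, since $G$ itself has arboricity exactly $\alpha$, there is a non-empty subgraph — for instance $G$ itself, or more carefully the densest subgraph — with at least, say, roughly $\alpha$ times as many edges as vertices; I will need to pin down the exact constant. Concretely, the minimum forest cover has size $\alpha$, so some forest $F_i$ is not a subset of the union of the others, meaning $G$ genuinely ``needs'' $\alpha$ forests; by Nash--Williams this forces some subgraph $H$ with $|E(H)| \ge \alpha(|V(H)|-1)$, hence average degree at least $2\alpha(|V(H)|-1)/|V(H)|$, which is at least $\alpha$ once $|V(H)| \ge 2$.

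Then I would apply the classic peeling lemma: starting from such a dense subgraph $H$ (with $|E(H)|/|V(H)| \ge \alpha/2$, say), repeatedly delete any vertex of degree less than $\alpha/2$. Each deletion removes fewer than $\alpha/2$ edges, so after deleting $j$ vertices we have removed fewer than $j\alpha/2$ edges; since $|E(H)| \ge (\alpha/2)|V(H)|$, we cannot delete all $|V(H)|$ vertices (that would require removing at least $(\alpha/2)|V(H)|$ edges, but strictly fewer than that many were removed and at least one edge would have to remain). Hence the process terminates at a non-empty subgraph, and by construction every surviving vertex has degree at least $\alpha/2$ in it. Taking this as the desired induced subgraph completes the proof.

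The main obstacle — really the only subtle point — is getting the constant right in the first step: I need a non-empty subgraph whose edge-to-vertex ratio is at least $\alpha/2$ (so that peeling at threshold $\alpha/2$ is guaranteed not to exhaust it), and arboricity exactly $\alpha$ gives this only up to the $\pm 1$ in $n'-1$ and the factor-$2$ slack in Nash--Williams. I would handle this by invoking Nash--Williams to obtain a subgraph $H$ with $|E(H)| \ge \alpha(|V(H)| - 1)$; if $|V(H)| = 1$ then $H$ has no edges, contradicting that it witnesses arboricity $\alpha \ge 1$, so $|V(H)| \ge 2$ and thus $|E(H)| \ge \alpha(|V(H)|-1) \ge (\alpha/2)|V(H)|$, which is exactly the bound the peeling step needs. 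Everything else is routine.
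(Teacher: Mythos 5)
Your approach is essentially the same as the paper's: invoke Nash--Williams to extract a dense subgraph, then peel low-degree vertices at threshold $\alpha/2$. The peeling step is fine. The one real problem is the Nash--Williams bound you claim. Nash--Williams says arboricity at most $\alpha$ iff every vertex subset $U$ satisfies $|E(U)| \le \alpha(|U|-1)$; negating this for $\alpha-1$ shows that a graph of arboricity exactly $\alpha$ has some $U$ with $|E(U)| \ge (\alpha-1)(|U|-1) + 1$. That is strictly weaker than the $|E(H)| \ge \alpha(|V(H)|-1)$ you assert, which would actually certify arboricity at least $\alpha+1$. Luckily the weaker, correct bound still gives you what you need: for $|U| \ge 2$ and $\alpha \ge 2$,
\begin{align*}
(\alpha-1)(|U|-1) + 1 - \tfrac{\alpha}{2}|U| = \tfrac{(|U|-2)(\alpha-2)}{2} \ge 0,
\end{align*}
so $|E(U)| \ge \tfrac{\alpha}{2}|U|$ and your peeling threshold is safe. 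So the argument is repairable in place, but as written the Nash--Williams step is stated incorrectly and you should swap in the correct bound and redo that line of arithmetic.

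For comparison, the paper reaches $2|E(U)| \ge \alpha|U|$ by combining the (correct) Nash--Williams bound with an additional counting fact about connected graphs of arboricity $\alpha$ (that $|E|+1 \ge |V|-1+\alpha$), whereas the computation above shows Nash--Williams alone already suffices once $\alpha \ge 2$. So once your Nash--Williams constant is fixed, your route is if anything a touch more direct than the paper's.
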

\noindent The rest of this section is dedicated to showing this folklore fact using some standard arguments.

To show \Cref{lem:highMinDeg}, we will make use of a famous result of Nash-Williams characterizing graph arboricity in terms of graph density as well as two simple helper lemmas.
\begin{theorem}[\cite{nash1961edge,nash1964decomposition,chen1994short}]\label{lem:NW}
    Graph $G = (V,E)$ has arboricity at most $\alpha$ iff for every $U \subseteq V$ we have
    \begin{align*}
        |E(U)| \leq \alpha \cdot (|U|-1).
    \end{align*}
\end{theorem}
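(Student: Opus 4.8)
The plan is to prove the two directions of the biconditional separately; the forward direction is immediate and the converse is the heart of the statement. For the forward direction, suppose $G$ has arboricity at most $\alpha$ and fix a forest cover $F_1,\dots,F_\alpha$ (padding with edgeless forests if the minimum cover is smaller). For any nonempty $U\subseteq V$, $E(U)$ is the disjoint union of the sets $F_i\cap E(U)$, and each $F_i\cap E(U)$ is a forest whose non-isolated vertices lie in $U$, hence has at most $|U|-1$ edges; summing over $i$ gives $|E(U)|\le\alpha(|U|-1)$ (both sides being $0$ when $|U|=1$).

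For the converse I would use a standard augmenting argument (this is essentially the elementary proof of \cite{chen1994short}). Assume the density condition holds but $G$ has no cover by $\alpha$ forests. Among all colorings of an edge subset $E'\subseteq E$ by colors $\{1,\dots,\alpha\}$ whose color classes $F_1,\dots,F_\alpha$ are each forests, fix one with $|E'|$ maximum; by assumption some edge $e=uv$ is uncolored. For each color $i$, $u$ and $v$ must lie in the same tree of $F_i$ (else coloring $e$ with $i$ enlarges $E'$); let $P_i$ be the $u$--$v$ path in that tree. I would then grow a vertex set $U\ni u,v$ by a closure process over $F_1,\dots,F_\alpha$: whenever some edge on some $P_i$ joins two distinct trees of some $F_j$, recoloring it with $j$ and then coloring $e$ with $i$ enlarges $E'$, a contradiction; failing that, one collects into $U$ the vertices reachable along these paths, terminating with a set $U$ for which $F_i\cap E(U)$ is connected and spans $U$ for every $i$, so $|E'(U)|\ge\alpha(|U|-1)$. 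Since $e\in E(U)\setminus E'$ we get $|E(U)|\ge\alpha(|U|-1)+1>\alpha(|U|-1)$, contradicting the hypothesis. I expect the main obstacle to be making this closure process precise — maintaining the forest property of all $\alpha$ color classes simultaneously through the swaps, and showing that when no swap enlarges $E'$ the set $U$ is spanned by each color class.

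A slicker but less self-contained route for the converse is to invoke the matroid union theorem for the graphic matroid $M(G)$, whose rank function is $r(F)=|V|-c(V,F)$: that theorem gives a partition of $E$ into $\alpha$ forests iff $|F|\le\alpha\cdot r(F)$ for every $F\subseteq E$, and a short check shows this condition is equivalent to the stated one — given $F$, apply the hypothesis to the vertex sets of the nontrivial components of $(V,F)$ and sum; given $U$, take $F=E(U)$ and use $\sum_j(|U_j|-1)\le|U|-1$ over the components $U_j$ of $(V,E(U))$. I would ultimately present whichever route better matches the level of self-containedness the paper wants, though the citation to \cite{chen1994short} suggests the elementary augmenting argument.
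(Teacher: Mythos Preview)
The paper does not prove this theorem at all: it is stated as a cited result (with references to Nash--Williams and Chen) and then immediately used. So there is nothing in the paper to compare your proposal against.

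That said, your proposal is a reasonable sketch of standard proofs of the Nash--Williams theorem. The forward direction is correct as written. For the converse, both routes you outline are valid. The matroid-union route is cleaner and the reduction you describe (from the edge-set condition $|F|\le\alpha\cdot r(F)$ to the vertex-set condition by looking at components) is the right one. The augmenting argument is the one in \cite{chen1994short}, but your sketch of it is genuinely incomplete: the ``closure process'' you allude to needs to be made precise, and as you yourself note, showing that each color class spans $U$ when no swap is available is exactly the nontrivial part. If you were to write this out, you would need to define the set $U$ as the vertices reachable from $\{u,v\}$ by alternating paths through the $F_i$'s and argue inductively that any edge of $E(U)$ in color $i$ whose removal disconnects $u$ from $v$ in $F_i$ can be swapped into some other color class without creating a cycle, or else all color classes already span $U$. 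Since the paper treats this as a black box, either route (or simply the citation) suffices.
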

Likewise we make use of the following two simple helper lemmas.
\begin{lemma}\label{lem:arbHelp}
    Let $G = (V,E)$ be a connected graph with arboricity $\alpha$. Then $|E| + 1 \geq |V| -1 + \alpha$.
\end{lemma}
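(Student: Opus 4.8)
The plan is to derive the bound by combining two elementary facts: a connected graph on $k$ vertices has at least $k-1$ edges, and the Nash--Williams characterization of arboricity (\Cref{lem:NW}). Note first that the target inequality is equivalent to $|E| \ge |V| - 2 + \alpha$. Since $G$ is connected we always have $|E| \ge |V| - 1$, and when $\alpha \le 1$ this already yields $|E| \ge |V| - 1 \ge |V| - 2 + \alpha$. So it remains to handle $\alpha \ge 2$.

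For $\alpha \ge 2$, the key step is to extract a dense witness set. Since the arboricity of $G$ equals $\alpha$ and hence exceeds $\alpha - 1$, the equivalence of \Cref{lem:NW} fails for the parameter $\alpha - 1$, so there is a set $U \subseteq V$ with $|E(U)| > (\alpha - 1)(|U| - 1)$, i.e.\ $|E(U)| \ge (\alpha - 1)(|U| - 1) + 1$. In particular $|E(U)| \ge 1$, so $|U| \ge 2$. Next I would account for the edges of $G$ lying outside $G[U]$: contracting all of $U$ to a single vertex produces a connected multigraph on $|V| - |U| + 1$ vertices whose non-loop edges are precisely the edges of $E \setminus E(U)$, so connectivity gives $|E| - |E(U)| \ge |V| - |U|$. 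Adding the two bounds,
\[
|E| \;\ge\; (\alpha - 1)(|U| - 1) + 1 + |V| - |U| \;=\; (\alpha - 2)(|U| - 1) + |V| \;\ge\; (\alpha - 2) + |V|,
\]
where the last inequality uses $|U| - 1 \ge 1$ together with $\alpha - 2 \ge 0$. This is exactly $|E| \ge |V| - 2 + \alpha$, which rearranges to the claimed bound.

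I don't anticipate a genuine obstacle: the argument is short once \Cref{lem:NW} is in hand. The only points needing a little care are the case split at $\alpha \le 1$ (so that the inequality $(\alpha-2)(|U|-1) \ge \alpha - 2$ is only invoked when $\alpha - 2 \ge 0$), and stating the contraction step precisely --- namely that edges internal to $U$ become discarded self-loops while every other edge of $G$ survives as a distinct edge of the contracted graph, so that the connectivity estimate $|V| - |U|$ really counts edges of $E \setminus E(U)$ and these can be added to $|E(U)|$ without double-counting.
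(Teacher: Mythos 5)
Your proof is correct, but it takes a genuinely different route from the paper's. The paper argues directly on a minimum forest cover $F_1,\ldots,F_\alpha$: by moving edges between forests one may assume $F_1$ is a spanning tree (contributing $|V|-1$ edges), and minimality of the cover forces each of $F_2,\ldots,F_\alpha$ to retain at least one edge, giving $|E|\ge (|V|-1)+(\alpha-1)$ immediately. You instead invoke Nash--Williams (\Cref{lem:NW}) to extract a set $U$ with $|E(U)|\ge(\alpha-1)(|U|-1)+1$, then handle the remaining edges via a contraction-plus-connectivity count. Both are sound; the paper's version is shorter and needs only the definition of arboricity, whereas yours brings in the Nash--Williams theorem and a contraction step but in exchange makes the role of the dense witness set $U$ explicit (which is the same ingredient the paper only uses later, in the proof of \Cref{lem:highMinDeg}). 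Your case split at $\alpha\le 1$ and the observation that $|U|\ge 2$ are both needed and both handled correctly.
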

\begin{proof}
    Let $F_1, F_2, \ldots, F_\alpha$ be the minimum size forest decomposition of $G$. We may assume that $F_1$ is a spanning tree of $G$ since if it is not we can always move edges from $F_i$ for $i \geq 2$ to $F_1$ to make it a spanning tree. Furthermore, we may assume that each $F_i$ contains at least one edge (otherwise we would have violated the definition of arboricity). It follows that $F_1$ contains $|V|-1$ unique edges and $F_i$ for $i \geq 2$ contains at least one unique edge, giving the inequality.
\end{proof}
\begin{lemma}\label{lem:minDeg}
    Let $G = (V,E)$ be a graph with average degree $\rho := \frac{\sum_{u} \deg_G(u)}{|V|}$. Then there is a non-empty $U \subseteq V$ such that $G[U]$ has minimum degree at least $\rho/2$.
\end{lemma}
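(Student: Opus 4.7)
The plan is to use the standard peeling argument. Starting from $U_0 := V$, I iteratively produce a sequence of vertex sets $U_0 \supseteq U_1 \supseteq U_2 \supseteq \cdots$ by the following rule: if $G[U_i]$ has a vertex $v$ with $\deg_{G[U_i]}(v) < \rho/2$, set $U_{i+1} := U_i \setminus \{v\}$; otherwise stop and output $U := U_i$. By construction, whenever the process stops, $G[U]$ has minimum degree at least $\rho/2$, so the only thing to verify is that the process stops before emptying the vertex set (since an empty induced subgraph would have no edges and would trivially satisfy the degree bound vacuously, which is not what we want).

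To show non-emptiness, I would track the number of edges of $G$ that remain in $G[U_i]$. Each peeling step removes a vertex whose current degree is strictly less than $\rho/2$, and hence removes strictly fewer than $\rho/2$ edges. Thus if the process were to run for $|V|$ steps and exhaust all vertices, the total number of edges removed would be strictly less than $|V|\cdot \rho/2 = |E|$, which is impossible since the final (empty) subgraph has zero edges and all $|E|$ edges must be removed. Therefore the process must halt at some $U_i$ with $U_i \neq \emptyset$, and by the termination condition $G[U_i]$ has minimum degree at least $\rho/2$, as required.

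There is no real obstacle here: this is a textbook peeling argument. The only subtle point is to make sure the strict inequality $\deg_{G[U_i]}(v) < \rho/2$ used at the deletion step matches the non-strict inequality $\min\deg \ge \rho/2$ guaranteed at termination, which it does by the stopping rule. If one wanted a cleaner presentation one could phrase it as: among all non-empty subsets $U \subseteq V$, pick one maximizing $|E(U)| - (\rho/2)(|U|-1)$ (or $|E(U)| - (\rho/2)|U|$); the initial choice $U=V$ gives a positive value, so the maximizer is non-empty, and any vertex of $G[U]$ of degree less than $\rho/2$ could be removed to strictly increase the objective, contradicting maximality.
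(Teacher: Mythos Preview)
Your proposal is correct and is essentially the same peeling argument as the paper's proof. The only cosmetic difference is the potential you track: you count total edges removed and note it stays strictly below $|V|\cdot \rho/2 = |E|$, whereas the paper tracks the running average degree $\rho_U$ and notes it stays positive (indeed stays above $\rho$) because each deletion drops the numerator $\sum_{u\in U}\deg_{G[U]}(u)$ by less than $\rho$ and the denominator by $1$; these are equivalent bookkeepings of the same invariant.
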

\begin{proof}
    Our proof is by the following construction: initialize $U$ to $V$; while there exists a vertex $u \in U$ such that $\deg_{G[U]}(u) < \rho / 2$ remove $u$ from $U$. By construction the minimum degree of $G[U]$ is at least $\rho/2$. 
    
    To see that $U$ is non-empty, let
    \begin{align*}
        \rho_U := \frac{\sum_{u \in U} \deg_{G[U]}(u)}{|U|}
    \end{align*}
    be the average degree of $U$. It suffices to show that our construction always satisfies $\rho_U > 0$ since if $U$ is empty at the end of our construction then at some point the numerator of $\rho_U$ would be $0$ and the denominator would be $1$ (namely, when $|U| = 1$). This follows since initially $\rho_U$ has value $\rho$ and each time we remove a vertex from $u$ we decrease the numerator of $\rho_U$ by strictly less than $\rho$ and the denominator by $1$.
\end{proof}

Using the above theorem of Nash-Williams and our two helper lemmas, we conclude with a proof of \Cref{lem:highMinDeg}.
\highMinDeg*
\begin{proof}
    By \Cref{lem:NW} there is a non-empty $U \subseteq V$ such that
    \begin{align}\label{eq:cb}
        |E(U)| \geq (\alpha-1) \cdot (|U| - 1) + 1
    \end{align}
    since other we could decrease $\alpha$ by $1$ and still satisfy the inequality in \Cref{lem:NW}, contradicting \Cref{lem:NW}. We can assume without loss of generality that $U$ is connected. Applying \Cref{lem:arbHelp} and the fact that $G[U]$ has arboricity at least $\alpha$ we have
    \begin{align}\label{eq:cc}
        |E(U)| + 1 \geq |U| - 1 + \alpha
    \end{align}
    
    Thus, combining Equations \ref{eq:cb} and \ref{eq:cc} we have
    \begin{align*}
        2|E(U)| \geq \alpha \cdot|U|
    \end{align*}
    
    It follows that the average degree of a vertex in $G[U]$ is 
    \begin{align*}
        \frac{\sum_{u \in U} \deg_{G[U]}(u)}{|U|} = \frac{2 |E(U)|}{|U|} \geq \alpha.
    \end{align*}
    and so by \Cref{lem:minDeg} we know that $G[U]$ and therefore $G$ contains a subgraph of minimum degree at least $\alpha/2$.
\end{proof}

\subsection{Arboricity of Parallel Greedy Graphs (\Cref{thm: main} Proof)}

In this section, we provide the proof of \Cref{thm: main}. We make use of \Cref{lem: linked ED} and \Cref{lem:highMinDeg} from previous sections. We also make use of the following simple helper lemma.

\begin{lemma}
\label{lem: pg subgraph}
Every subgraph of a $t$-\pg graph is also a $t$-\pg graph.
\end{lemma}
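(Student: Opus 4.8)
The plan is to show that if $G = (V, E)$ is $t$-$\pg$, witnessed by some $t$-$\pg$ sequence $(E_1, \ldots, E_k)$, and $G'$ is any subgraph of $G$, then $G'$ is also $t$-$\pg$. The natural witnessing sequence for $G'$ is the ``restriction'' $(E_1', \ldots, E_k')$ where $E_i' := E_i \cap E(G')$; I would verify this is a valid $t$-$\pg$ sequence.

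First I would check the easy condition: each $E_i'$ is a matching on $V$, which is immediate since $E_i' \subseteq E_i$ and a subset of a matching is a matching. Also $\bigcup_i E_i' = \bigcup_i (E_i \cap E(G')) = \left(\bigcup_i E_i\right) \cap E(G') = E \cap E(G') = E(G')$, so the sequence has the right union. It remains to check the distance condition: for each $(u,v) \in E_i'$ we need $\dist_{G_{i-1}'}(u,v) > t$, where $G_{i-1}'$ is the graph on $V$ with edge set $\bigcup_{1 \le j \le i-1} E_j'$.

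The key observation is monotonicity of shortest-path distance under edge deletion: since $E_j' \subseteq E_j$ for all $j$, we have $\bigcup_{j \le i-1} E_j' \subseteq \bigcup_{j \le i-1} E_j$, so $G_{i-1}'$ is a subgraph of $G_{i-1}$ (on the same vertex set), and therefore $\dist_{G_{i-1}'}(u,v) \ge \dist_{G_{i-1}}(u,v)$ for every pair $u, v$. Now for $(u,v) \in E_i' \subseteq E_i$, the $t$-$\pg$ property of the original sequence gives $\dist_{G_{i-1}}(u,v) > t$, and hence $\dist_{G_{i-1}'}(u,v) > t$ as required. This completes the verification.

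There is essentially no main obstacle here — the statement is a routine sanity check and the only thing to be careful about is that deleting edges only increases distances (never decreases them), so the $t$-unspannedness condition is preserved when passing to a subgraph. One minor point worth stating explicitly: $t$-$\pg$ graphs are defined on a vertex set, and ``subgraph'' should be read as a subgraph on a (possibly smaller) vertex set; if $G'$ omits some vertices of $V$ one can first restrict all $E_i$ to edges within $V(G')$ and then apply the above argument with $V$ replaced by $V(G')$, since no edge of $E_i'$ has an endpoint outside $V(G')$. Either way the argument is the same.
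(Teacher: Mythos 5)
Your proposal is correct and follows essentially the same approach as the paper: restrict each $E_i$ to the subgraph's edges, note each restriction is still a matching, and use monotonicity of shortest-path distance under edge deletion to preserve the $t$-unspannedness condition (the paper phrases this step as a contradiction, but it is the same observation).
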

\begin{proof}
Let $G$ be a $t$-\pg graph and let $H$ be a subgraph of $G$.
Let $(E_1,\ldots,E_k)$ be the $t$-\pg sequence that generates $G$. For each $i \in [k]$, let $E'_i$ be the subset of $E_i$ that contains all edges lying in $H$. Clearly, $E'_i$ is a matching in $H$, and if we denote by $H_{i-1}$ the graph induced by edges in $\bigcup_{1\le j\le i-1}E'_j$, then for every edge $(u,v)\in E'_i$ we have
\begin{align*}
    \dist_{H_{i-1}}(u,v)> t.
\end{align*}
This follows because otherwise we have 
\begin{align*}
\dist_{G_{i-1}}(u,v)\le \dist_{H_{i-1}}(u,v)\le t
\end{align*}
and since $H_{i-1}$ is a subgraph of $G_{i-1}$ and $(u,v)$ is also an edge in $E_i$, this leads to a contradiction to the assumption that $(E_1,\ldots,E_k)$ is a $t$-\pg sequence.
\end{proof}

We conclude with our proof of \Cref{thm: main}, illustrated in \Cref{fig:pgProof}.

\begin{figure}[h]
    \centering
    \begin{subfigure}[b]{0.24\textwidth}
        \centering
        \includegraphics[width=\textwidth,trim=0mm 0mm 260mm 0mm, clip]{./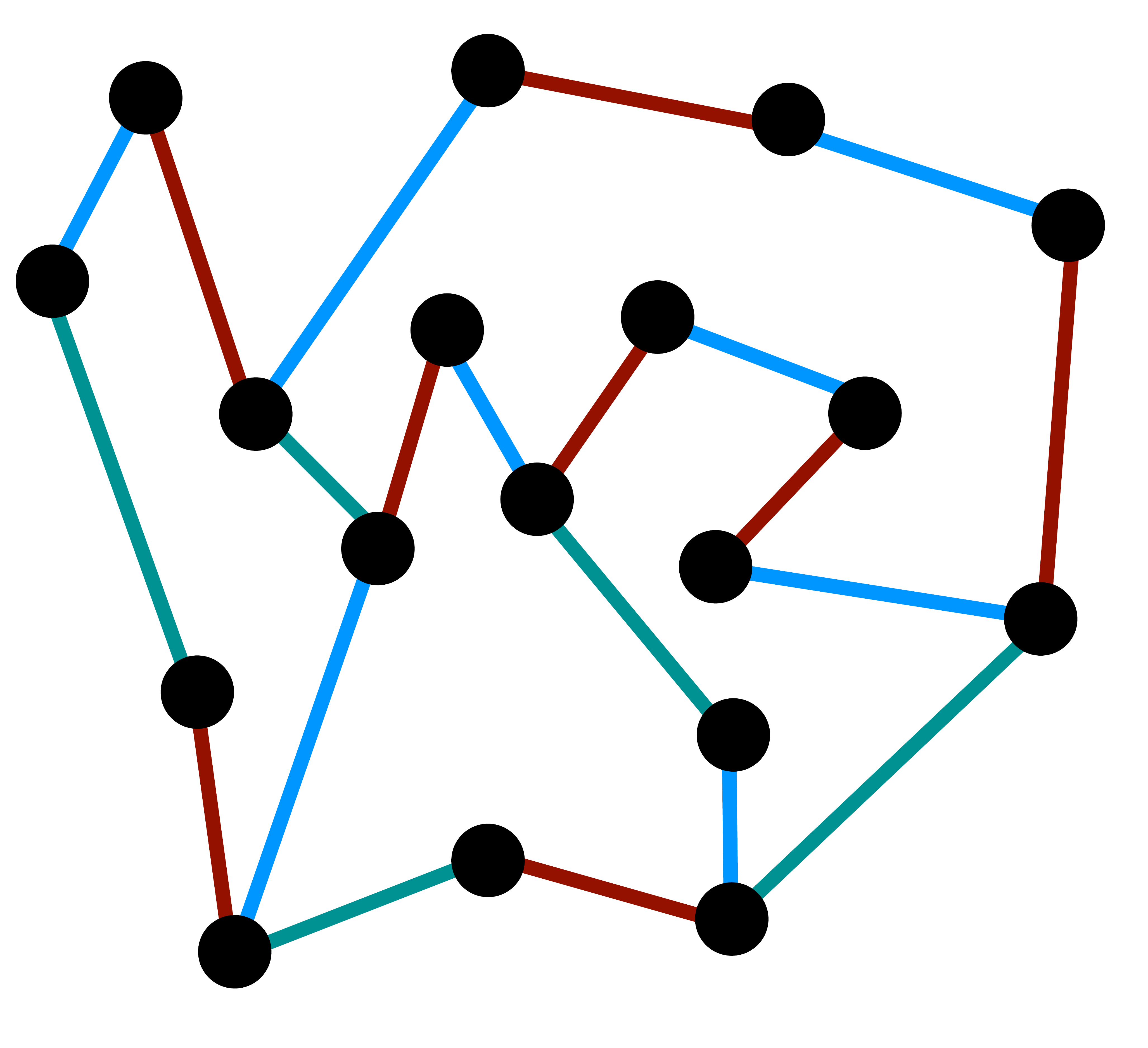}
        \caption{Input $t$-\pg graph $G$.}\label{sfig:pgProof1}
    \end{subfigure}    \hfill
    \begin{subfigure}[b]{0.24\textwidth}
        \centering
        \includegraphics[width=\textwidth,trim=0mm 0mm 260mm 0mm, clip]{./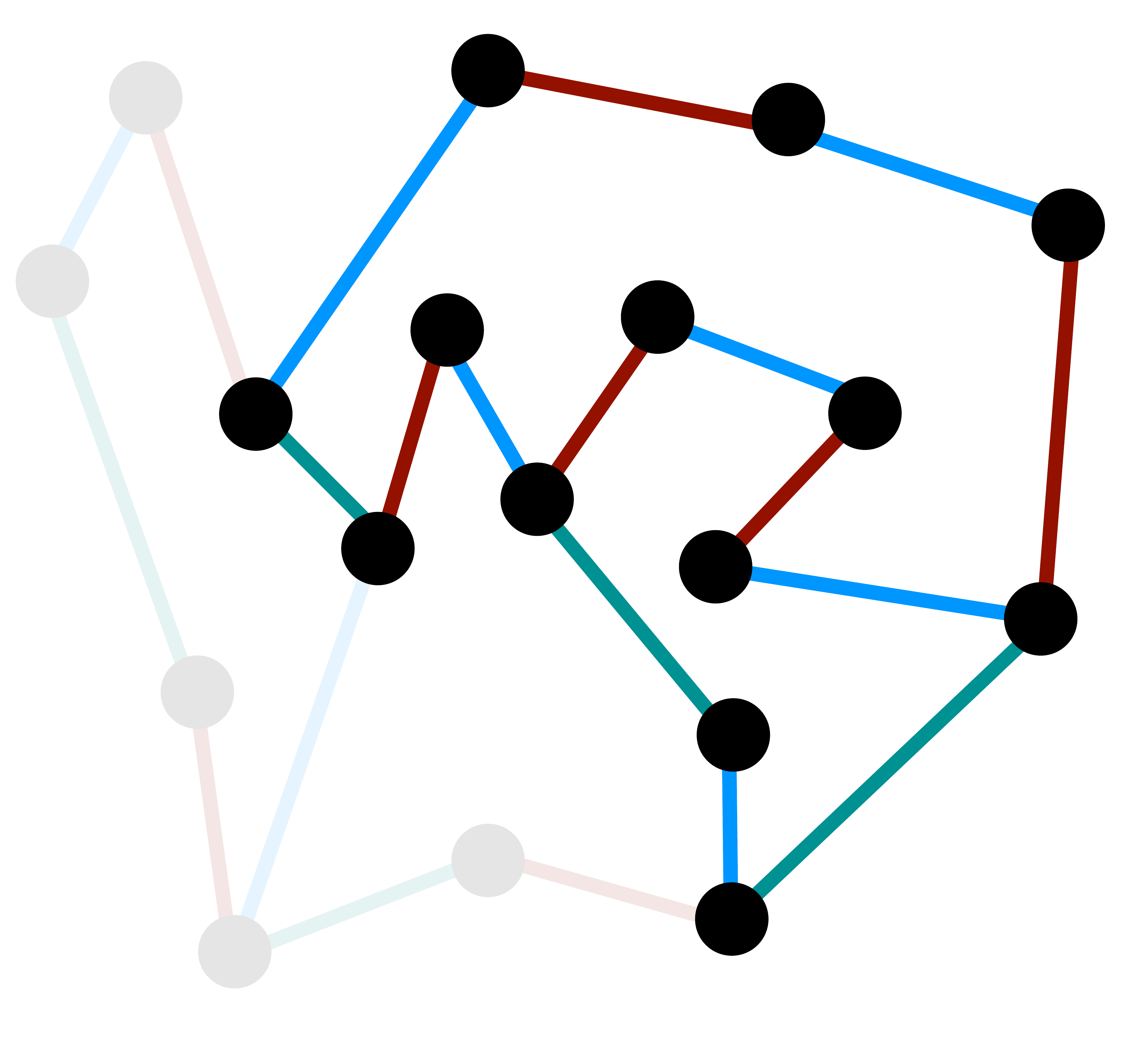}
        \caption{Subgraph $H \subseteq G$.}\label{sfig:pgProof2}
    \end{subfigure}    \hfill
    \begin{subfigure}[b]{0.24\textwidth}
        \centering
        \includegraphics[width=\textwidth,trim=0mm 0mm 260mm 0mm, clip]{./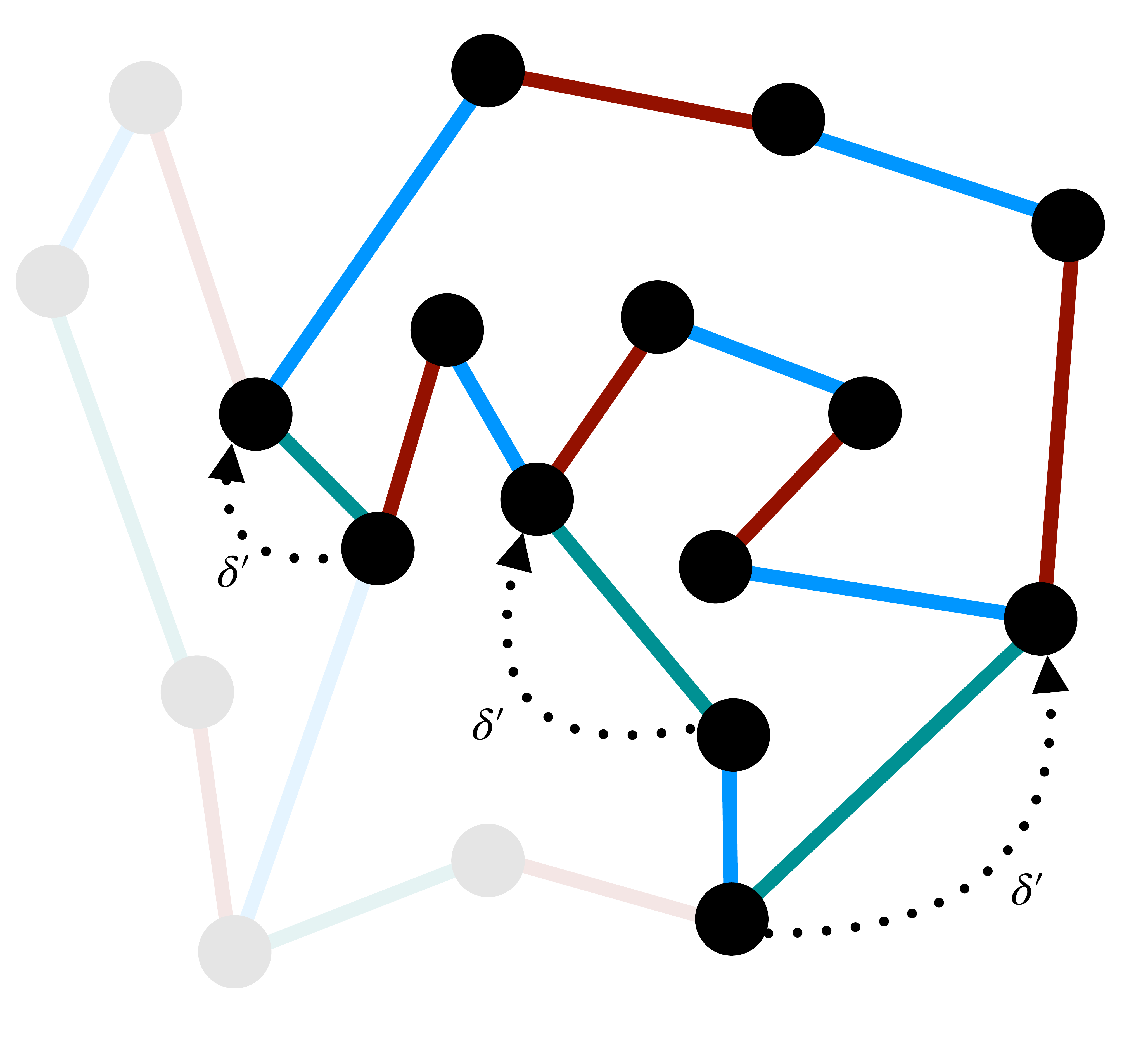}
        \caption{Demand routed.}\label{sfig:pgProof3}
    \end{subfigure} \hfill
    \begin{subfigure}[b]{0.24\textwidth}
        \centering
        \includegraphics[width=\textwidth,trim=0mm 0mm 260mm 0mm, clip]{./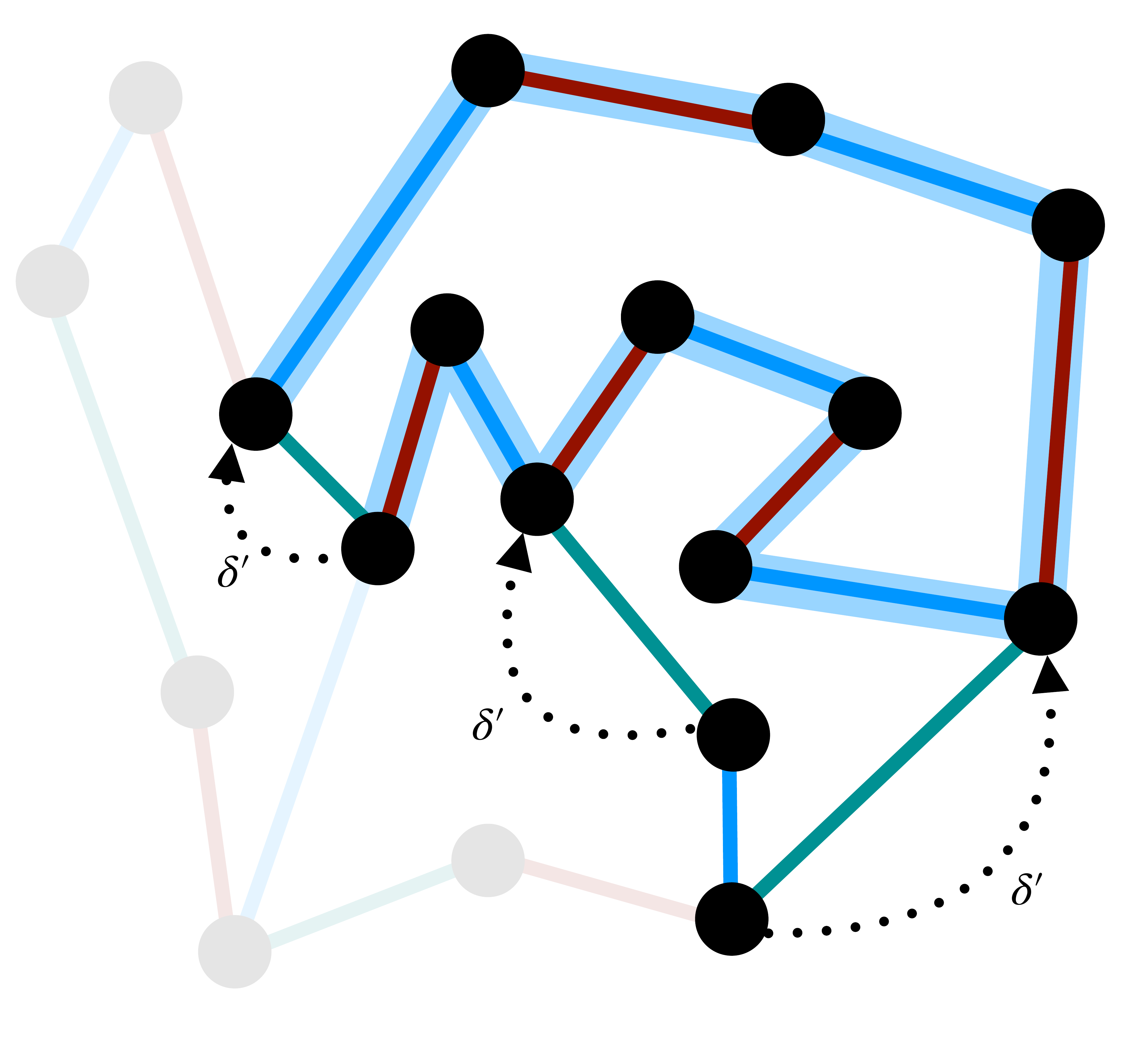}
        \caption{$t$-length path.}\label{sfig:pgProof4}
    \end{subfigure}
    \caption{Our proof of the bound on the arboricity of a $t$-\pg graph. \ref{sfig:pgProof1} gives our input $t$-\pg graph $G$ where the last matching $M$ is in green. \ref{sfig:pgProof2} gives the subgraph $H \subseteq G$ of large minimum degree implied by \Cref{lem:highMinDeg}. \ref{sfig:pgProof3} gives the demand that \Cref{lem: linked ED} routes with congestion $\delta'/2$ and \ref{sfig:pgProof4} gives in blue the $t$-length path between endpoints of an edge in $M$ that does not contain an edge in $M$. The existence of this path is implied by the flow of \Cref{lem: linked ED} as applied to $H$ and this path contradicts the fact that $G$ is $t$-\pg assuming $t \geq 10$.} \label{fig:pgProof}
\end{figure}

\gPGArb*
\begin{proof}

    Throughout this proof we use the following parameters:
$$\phi=1/(2\cdot t\cdot n^{\congslack/ t} \cdot \log n) \quad\quad\text{and}\quad\quad \delta=4\conConst\cdot t\log n/\phi^2 = t^3\cdot \log^3n \cdot n^{O(1/t)}$$
where $\congslack$ is the constant stated in \Cref{thm: linked ED} and $\conConst$ is the constant stated in \Cref{thm:flow character}.

Let $G$ be any $t$-\pg graph. Assume for contradiction that $G$ has arboricity greater than $2\delta$. It follows by \Cref{lem:highMinDeg} that $G$ has a non-empty (induced) subgraph $H$ of minimum degree at least $\delta$.

We now use \Cref{lem: linked ED} to complete the proof. We apply \Cref{lem: linked ED} to $H$, $\phi$ and $t$. Let $H'$ be the subgraph we obtain, so $H'$ is a subgraph of $H$ and a subgraph of $G$, and therefore also a $t$-\pg graph (by \Cref{lem: pg subgraph}).
Let $(E'_1,\ldots,E'_k)$ be a $t$-\pg sequence that generates $H'$. 
Denote $M=E'_k$ as the last set in this sequence.

From \Cref{lem: linked ED}, there exists a flow $F$ that sends $\delta' = \frac{\phi}{2t} \cdot \delta$ units of flow from $u$ to $v$ in $H'$ for each edge $\{u,v\}\in M$ with congestion $\delta'/2$ and dilation $t$. Thus, the total amount of flow sent by $F$ from sources to sinks is $|M|\cdot \delta'$. On the other hand, each edge in $H'$ carries at most $\delta'/2$ units of flow and so all edges of $M$ carry at most $|M|\cdot (\delta'/2) < |M|\cdot \delta'$ units of flow.

Thus, some flow must be sent along a path that does not contain edges of $M$. Let $\{u,v\}$ be the endpoints of such a path. As the flow only uses paths of length at most $t$, this implies that $\dist_{H'\setminus M}(u,v)\le t$, and it follows that $\dist_{H'\setminus M}(u,v)\le t$, a contradiction to the fact that $H'$ is a $t$-\pg graph.
\end{proof}

\section{Sparse Spanner from Parallel Greedy (\Cref{thm:main} Proof)}
We conclude with our proof that the parallel greedy algorithm produces a sparse $t$-spanner.
\mainThm*
\begin{proof}
    The argument that the output graph $H$ is a $t$-spanner is identical to the usual analysis of the greedy algorithm; in particular, if $G$ has no $t$-unspanned edges with respect to $H$ then $H$ is a $t$-spanner. This follows since in such an $H$, every edge $\{u,v\} \in G$ is either in $H$ or $d_H(u,v) \leq t$. Thus, for two arbitrary vertices $u$ and $v$ with shortest path $P = (u = v_0, v_1, v_2, \ldots, v_k = v)$ in $G$ we have $d_H(v_{i-1}, v_{i}) \leq t$ for each $i \in [k]$ and so 
    \begin{align*}
        d_H(u,v) \leq \sum_{i \in [k]} d_H(v_{i-1}, v_{i}) \leq t \cdot |P| = t \cdot d_G(u,v).
    \end{align*}

    Likewise, by construction the output of the parallel greedy algorithm is a $t$-$\pg$ graph. Thus, by \Cref{thm: main} it has arboricity at most $t^3\cdot \log^3 n \cdot n^{O(1/t)}$ and so contains at most $t^3 \cdot \log^3 n \cdot n^{1 + O(1/t)}$ edges.
\end{proof}

\section{Conclusion and Future Directions}
In this work we showed that the classic greedy algorithm for constructing $t$-spanners produces sparse $t$-spanners even when many edges from a matching are added in parallel. Using a simple application of length-constrained expander decompositions, we demonstrated that this process results in a graph with at most $t^3 \cdot \log^3 n \cdot n^{1 + O(1/t)}$ edges.

The most obvious future direction is to tightly characterize the density of the spanners produced by this algorithm. In contrast, to the classic sequential greedy algorithm, simple examples (e.g.\ the $n$-dimensional hypercube) demonstrate that the $t$-spanners produced by the parallel greedy algorithm contain at least $\Omega(n \cdot \log n)$-many edges. As such, an exciting future direction is to determine whether or not this simple lower bounds can be matched on the upper bounds side. Furthermore, our work is a simple application of length-constrained expander decompositions. Work in algorithms work has only recently begun to fully develop applications of these decompositions and so we leave further applications as an exciting future direction. Likewise, we believe $t$-$\pg$ are a fundamental graph class and leave further applications of their sparsity as a future direction.

\section*{Acknowledgements}
We would like to thank Greg Bodwin for helpful comments and references.

Haeupler and Hershkowitz supported in part by NSF grants CCF-1527110, CCF-1618280, CCF-1814603, CCF-1910588, NSF CAREER award CCF1750808, a Sloan Research Fellowship, funding from the European Research Council (ERC) under the European Union’s Horizon 2020 research and innovation program (ERC grant agreement 949272) and the Swiss National Foundation (project grant 200021\_184735). Hershkowitz also funded by the SNSF, Swiss National Science Foundation grant 200021\_184622. Tan supported by a grant to DIMACS from the Simons Foundation (820931).

\bibliography{abb,main}
\bibliographystyle{alpha}

\end{document}